\newcommand{\bc}{\mathbf{c}}
\newcommand{\bx}{\mathbf{x}}
\newcommand{\br}{\mathbf{r}}
\newcommand{\bt}{\mathbf{t}}
\newcommand{\by}{\mathbf{y}}
\newcommand{\bff}{\mathbf{f}}
\newcommand{\bb}{\mathbf{b}}
\newcommand{\bu}{\mathbf{u}}
\newcommand{\bz}{\mathbf{z}}
\newcommand{\balpha}{\boldsymbol{\alpha}}
\newcommand{\bbeta}{\boldsymbol{\beta}}
\newcommand{\bmu}{\boldsymbol{\mu}}
\newcommand{\bchi}{\boldsymbol{\chi}}
\newcommand{\btheta}{\boldsymbol{\theta}}
\newcommand{\btau}{\boldsymbol{\tau}}
\newcommand{\brho}{\boldsymbol{\rho}}
\newcommand{\bzero}{\mathbf{0}}
\def\T{{ \mathrm{\scriptscriptstyle T} }}
\newtheorem{theorem}{Theorem}[section]
\newtheorem{lemma}[theorem]{Lemma}
\begin{document}

\def\spacingset#1{\renewcommand{\baselinestretch}%
{#1}\small\normalsize} \spacingset{1}

\def\T{{ \mathrm{\scriptscriptstyle T} }}

  \title{\bf Gibbs optimal design of experiments}
  \author{
    Antony M. Overstall \\
    School of Mathematical Sciences, University of Southampton,\\ Southampton SO17 1BJ, United Kingdom,\\ A.M.Overstall@soton.ac.uk\\
    and \\
    Jacinta Holloway-Brown \\
    School of Computer and Mathematical Sciences, University of Adelaide,\\
    Adelaide 5005, Australia,\\
    jacinta.holloway-brown@adelaide.edu.au\\
    and \\
    James M. McGree \\
    School of Mathematical Sciences, Queensland University of Technology,\\ Brisbane 4001, Australia,\\ james.mcgree@qut.edu.au
    }
  \maketitle

\bigskip
\begin{abstract}
Bayesian optimal design is a well-established approach to planning experiments. A distribution for the responses, i.e. a statistical model, is assumed which is dependent on unknown parameters. A utility function is then specified giving gain in information in estimating the true values of the parameters, using the Bayesian posterior distribution. A Bayesian optimal design is given by maximising expectation of the utility with respect to the distribution implied by statistical model and prior distribution for the true parameter values. The approach accounts for the experimental aim, via specification of the utility, and of assumed sources of uncertainty. However, it is predicated on the statistical model being correct. Recently, a new type of statistical inference, known as Gibbs inference, has been proposed. This is Bayesian-like, i.e. uncertainty for unknown quantities is represented by a posterior distribution, but does not necessarily require specification of a statistical model. The resulting inference is less sensitive to misspecification of the statistical model. This paper introduces Gibbs optimal design: a framework for optimal design of experiments under Gibbs inference. A computational approach to find designs in practice is outlined and the framework is demonstrated on exemplars including linear models, and experiments with count and time-to-event responses.
\end{abstract}

\noindent%
{\it Keywords:}  Gibbs statistical inference; loss function, robust design of experiments; utility function
\vfill

\newpage
\spacingset{1} 

\section{Introduction} \label{sec:intro}

Experiments are key to the scientific method and are a pillar of statistical science \citep[e.g.][Chapter 6]{stigler_2016}. They are used to systematically investigate the relationship between a series of controllable variables and a response variable. In this paper, an experiment consists of a fixed number of runs, where each run involves the specification of all controllable variables and the subsequent observation of a response. The experimental aim is then addressed, usually by the estimation of a statistical model, or models. The quality of this analysis can be highly dependent on the design of the experiment: the specification of the controllable variables for each run \citep[see, e.g., ][]{mead_etal_2012}. 

\emph{Decision-theoretic Bayesian optimal design of experiments} \citep[][termed here as \emph{Bayesian optimal design} for brevity]{chalonerverdinelli1995} provides a principled approach to planning experiments. The Bayesian optimal design approach can briefly be described as follows. First, a statistical model is assumed. This is a probability distribution for the responses, fully specified apart from a vector of unknown parameters. A prior distribution is also assumed representing prior knowledge about the true values of the parameters, i.e. those parameter values that make the statistical model coincide with the true probability distribution for the responses. A utility function is then specified, returning the gain in information, in estimating the true values of the parameters, from the responses obtained via a given design. The expected utility is given by the expectation of the utility with respect to the responses and true parameter values, under the joint probability distribution implied by the statistical model and prior distribution. Lastly, a Bayesian optimal design maximises the expected utility over the space of all designs.

The advantages of the Bayesian optimal design approach include the following. Firstly, the approach explicitly takes account of the aim of the experiment through specification of the utility function. For example, the approach can easily be extended for model selection and/or prediction. Second, by taking the expectation of the utility, it incorporates all known sources of uncertainty. 

The patent disadvantage is that the objective function given by the expected utility is rarely available in closed form. Instead, an often computationally expensive numerical routine is used to approximate the expected utility which then needs to be maximised over a potentially high-dimensional design space. However, in the last decade, new computational methodology has been proposed to find Bayesian designs more efficiently than previously possible, or find designs for scenarios which were previously unattainable \citep[see, e.g., ][and reference therein]{rainforth_etal_2023}.

A more fundamental disadvantage is that the statistical model for the responses is specified before the responses are observed. The resulting design can be highly tailored to a statistical model which may actually be significantly misspecified, compromising the accuracy and precision of future statistical inference. 

Recently, there has been progress in Bayesian-like statistical inference that does not require the specification of a probability distribution, i.e. a statistical model. Instead a loss function is specified identifying desirable parameter values for given responses (those that minimise the loss). The specification of the loss does not necessarily follow from specification of a probability distribution for the responses. A so-called (unnormalised) Gibbs posterior distribution for the parameters is then given by the product of the exponential of the negative loss and a prior distribution for target parameter values. The target parameter values are the parameters which minimise the expectation of the loss with respect the true (but unknown and unspecified) probability distribution of the responses \citep{bissiri_etal_2016}. Gibbs inference (inference using the Gibbs posterior distribution) can be seen as a Bayesian-like analogue of classical M-estimation \citep[e.g., ][Chapter 7]{hayashi_2000}. \cite{bissiri_etal_2016} provide a thorough theoretical treatment of Gibbs inference. In particular, they show that the Gibbs posterior distribution provides coherent inference about the target parameter values. 


This paper proposes a \emph{decision-theoretic Gibbs optimal design of experiments} framework (referred to as \emph{Gibbs optimal design} for brevity). Similar to Bayesian optimal design, a utility function is specified: a function returning gain in information, in estimating the target parameter values, using responses obtained via a given design, where dependence on the responses is through the Gibbs posterior distribution. However the expected utility does not immediately follow. The absence of a statistical model, means there is no joint probability distribution for the responses and target parameter values. Instead, we propose taking expectation of the utility function with respect to a probability distribution for the responses which we term the designer distribution. This user-specified designer distribution should be flexible enough to be close to the true probability distribution for the responses, certainly closer than the statistical model. Specification of the designer distribution can be aided by the fact that it does not need to be ``useful", i.e. it will not be fitted to the observed responses of the experiment and need not be capable of addressing the experimental aim.

To demonstrate the concept of the designer distribution, consider the following simple example. Suppose the aim of the experiment is to investigate the relationship between a single controllable variable, $x$, and a continuous response. It is assumed that the mean response is given by a polynomial of $x$ where the coefficients are unknown parameters. A reasonable loss is sum of squares: the sum of the squared differences between the responses and their mean (given by the polynomial). A suitable designer distribution is the unique-treatment model \citep[e.g.][]{gilmourtrinca2012}: a probability distribution where each unique $x$ value in the experiment exhibits a unique mean. The unique-treatment model is not a useful model, i.e. it does not allow one to effectively learn the relationship between $x$ and the response. However, it is a flexible model likely to be closer to capturing the true relationship between $x$ and the response, than a polynomial. 

The topic considered in this paper fits within the field of \emph{robust design of experiments}. One of the first treatments of this topic was the work of \cite{box_draper_1959} who considered design of experiments for a statistical model with mean response a linear model with an intercept and first-order term, for a single controllable variable, when the true probability distribution has a mean response with an additional quadratic term. A relatively up-to-date review of robust design of experiments is given by \cite{wiens_2015} who considers the topic under misspecification of both the mean and error structure of the statistical model, as well as from the point of view of different applications (including dose-response, clinical trials and computer experiments).

The remainder of the paper is organised as follows. Section~\ref{sec:back} provides brief mathematical descriptions of Bayesian optimal design and Gibbs inference, before Section~\ref{sec:god} introduces the concept of Gibbs optimal design. Section~\ref{sec:LM} considers Gibbs optimal design for the class of linear models. Finally, we demonstrate Gibbs optimal design on illustrative examples in Section~\ref{sec:examples}.

\section{Background} \label{sec:back}

\subsection{Design problem setup} \label{sec:setup}

Suppose the experiment aims to investigate the relationship between $k$ controllable variables denoted $\bx = \left(x_1, \dots, x_k \right)^\T   \in \mathbb{X}$, and a measurable response denoted by $y$. The experiment consists of $n$ runs, where the $i$th run involves specifying $\bx_i =  \left(x_{i1}, \dots, x_{ik} \right)^\T   \in \mathbb{X}$ and observing response $y_i$, for $i=1,\dots,n$. Let $\by = \left(y_1, \dots, y_n \right)^\T $ denote the $n \times 1$ vector of responses and $X$ the $n \times k$ design matrix with $i$th row $\bx_i^\T $, for $i=1,\dots,n$. It is assumed that $\by$ are realisations from a multivariate probability distribution denoted $\mathcal{T}(X)$, which is the true but unknown response-generating probability distribution.  This paper addresses the specification of $X$ to learn the most about the true response-generating probability distribution.

\subsection{Bayesian optimal design} \label{sec:bod}

Bayesian optimal design relies on the specification of a statistical model and utility function. The statistical model, denoted $\mathcal{S}(\bt; X)$, is a probability distribution for the responses $\by$ used to represent the true response-generating probability distribution, $\mathcal{T}(X)$. The statistical model is fully specified up to a $p \times 1$ vector of parameters $\bt = \left(t_1,\dots,t_p \right)^\T  \in \Theta$ with parameter space $\Theta \subset \mathbb{R}^p$.

Suppose there exist values $\btheta_T$ of the parameters such that the statistical model and true response-generating probability distribution coincide, i.e. $\mathcal{T}(X) = \mathcal{S}(\btheta_T; X)$. If $\btheta_T$ is known, then the true response-generating distribution is completely known so, in some form, estimating $\btheta_T$ is the aim of the experiment. Bayesian inference for this aim is achieved by evaluating the Bayesian posterior distribution, denoted $\mathcal{B}(\by; X)$ and given by
\begin{equation}
\pi_\mathcal{B}(\bt \vert \by; X) \propto \pi(\by \vert \bt ; X) \pi_T(\bt).
\label{eqn:bayes}
\end{equation}
In (\ref{eqn:bayes}), $\pi(\by \vert \bt ; X)$ is the likelihood function: the probability density/mass function of $\mathcal{S}(\bt; X)$, and $\pi_T(\cdot)$ is the probability density function (pdf) for the prior distribution of $\btheta_T$ (denoted $\mathcal{P}_T$). For simplicity, we assume that the prior distribution $\mathcal{P}_T$ does not depend on the design $X$. However, the technicalities of Bayesian optimal design change little if $\mathcal{P}_T$ depends on $X$.

The utility function is denoted $u_{\mathcal{B}}(\bt, \by, X)$. It gives the gain in information of estimating parameter values $\bt$ using the Bayesian posterior distribution conditional on responses $\by$ collected via design $X$. The Bayesian expected utility is 
\begin{equation}
U_\mathcal{B}(X) = \mathrm{E}_{\mathcal{P}_T} \left\{ \mathrm{E}_{\mathcal{S}(\btheta_T, X)} \left[ u_{\mathcal{B}}(\btheta_T, \by,X) \right] \right\}.
\label{eqn:exputil}
\end{equation}
In (\ref{eqn:exputil}), the utility is evaluated at the true parameter values $\btheta_T$, indicating that the experimental aim is to estimate these values. The inner expectation is with respect to the responses $\by$ under $\mathcal{S}(\btheta_T; X)$, and the outer expectation with respect to the true parameter values $\btheta_T$, under the prior distribution, $\mathcal{P}_{\mathcal{T}}$.

A Bayesian optimal design is given by maximising $U_\mathcal{B}(X)$ over the design space $\mathbb{D} = \mathbb{X}^n$.

Two commonly-used utility functions are negative squared error (NSE; see, e.g., \citealt{robert_2007}, Section 2.5.1) and Shannon information (SH; see, e.g., \citealt{chalonerverdinelli1995}). The NSE utility is
$$u_{\mathcal{B},NSE}(\bt, \by,X) = - \left\Vert \bt - \mathrm{E}_{\mathcal{B}(\by; X)}\left(\btheta_T \right) \right\Vert_{I_p},$$
where $\mathrm{E}_{\mathcal{B}(\by; X)}\left(\btheta_T \right)$ is the Bayesian posterior mean of $\btheta_T$ and $\Vert \bu \Vert_{V} = \bu^\T  V \bu$ for a vector $\bu$ and a matrix $V$ of appropriate dimensions. The SH utility is
$$u_{\mathcal{B},SH}(\bt, \by,X) = \log \pi_{\mathcal{B}}(\bt \vert \by, X).$$
The Bayesian expected NSE and SH utilities are equal to the negative expected trace of variance matrix and entropy, respectively, of the Bayesian posterior distribution of $\btheta_T$, where expectation is with respect to the marginal distribution of $\by$, under the statistical model and prior distribution $\mathcal{P}_T$. Thus maximising the Bayesian expected NSE and SH utilities is equivalent to minimising expected uncertainty in the posterior distribution of $\btheta_T$, as measured by variance and entropy, respectively.



Notwithstanding the conceptual advantages of Bayesian optimal design listed in Section~\ref{sec:intro}, the disadvantage of finding Bayesian optimal designs in practice is clear. Firstly, the utility depends on $\by$ through the Bayesian posterior distribution. For most cases, this distribution is not available in closed form meaning the utility will not be available in closed form. Second, the integration required to evaluate the Bayesian expected utility given by (\ref{eqn:exputil}) will also not be analytically tractable. Although it is straightforward to derive a numerical approximation to the Bayesian expected utility $U_{\mathcal{B}}(X)$, this approximation then needs to be maximised over the $nk$-dimensional design space $\mathbb{D}$. The dimensionality of $\mathbb{D}$ can be high if the number of runs $n$ and/or number of controllable variables $k$ are large. This limits the effectiveness of commonly used numerical approximations such as, for example, Monte Carlo integration. However, in the last decade significant progress has been in the development of new computational methodology to find Bayesian optimal designs for a range of problems. See, for example, the review papers of \citet{ryanetal2016} and \citet{rainforth_etal_2023}, and references therein. 

A more fundamental disadvantage is that Bayesian optimal design assumes that the statistical model is correct, i.e. there exist $\btheta_T$ such that $\mathcal{T}(X) = \mathcal{S}(\btheta_T; X)$. This can be observed in the expected utility in (\ref{eqn:exputil}), where the utility is evaluated at $\btheta_T$, and expectation is taken with respect to the prior distribution of $\btheta_T$. When the statistical model is incorrect, there does not exist $\btheta_T$ such that $\mathcal{T}(X) = \mathcal{S}(\btheta_T; X)$. This makes evaluating the utility at $\btheta_T$ in (\ref{eqn:exputil}) nonsensical. However, it is well known \citep[see, e.g.][]{berk_1966, walker_2013, bissiri_etal_2016} that when the model is incorrect, the Bayesian posterior provides coherent inference about the parameter values, $\btheta_{SI}$, that minimise the Kullback-Liebler divergence between $\mathcal{T}(X)$ and $\mathcal{S}(\cdot; X)$. This raises the possibility of, in (\ref{eqn:exputil}), replacing evaluation of the utility at $\btheta_T$ by evaluation at $\btheta_{SI}$. The utility would then be representing gain in information on $\btheta_{SI}$, rather than the non-existent $\btheta_T$.
In this paper, we go further and consider a more general framework for optimal design revolving around Gibbs inference (of which Bayesian inference is a special case). 

\subsection{Gibbs inference} \label{sec:gibbsinf}

Gibbs inference (also known as general Bayesian inference) is an approach to statistical inference which is Bayesian-like, i.e. information is summarised by a probability distribution for unknown quantities, but does not necessarily rely on specification of a statistical model. Instead a loss function, denoted $\ell(\bt; \by, X)$ is specified. This function identifies desirable parameter values for observed responses $\by$. Notably, the corresponding classical M-estimators \citep[see, e.g., ][Chapter 7]{hayashi_2000} are defined
$$\hat{\btheta}_{\ell}(\by; X) = \arg \min_{\bt \in \Theta} \ell(\bt; \by, X).$$
The Gibbs posterior distribution, denoted $\mathcal{G}_\ell(\by; X)$, is given by
\begin{equation}
\pi_{\ell}\left(\bt \vert \by, X\right) \propto \exp \left[ - w \ell(\bt; \by, X) \right] \pi_{\ell}(\bt),
\label{eqn:gibbs}
\end{equation}
where $\pi_{\ell}(\cdot)$ is the pdf of the prior distribution (denoted $\mathcal{P}_\ell$) for the target parameter values (discussed below) and $w>0$ is a calibration weight (also discussed below). In (\ref{eqn:gibbs}), the quantity $\exp \left[ - w \ell(\bt; \by, X) \right]$ is known as the generalised likelihood. Similar to Bayesian inference, we assume that the prior distribution $\mathcal{P}_\ell$ does not depend on the design $X$.

\citet{bissiri_etal_2016} showed that the Gibbs posterior distribution provides coherent inference about the target parameter values defined as
$$\btheta_{\ell} = \arg \min_{\bt \in \Theta} L_{\mathcal{T}(X)}(\bt),$$
where $L_{\mathcal{T}(X)}(\bt) = \mathrm{E}_{\mathcal{T}(X)} \left[ w \ell(\bt; \by, X) \right]$ is the expected loss with respect to the responses $\by$ under the true response-generating distribution, $\mathcal{T}(X)$. 

Suppose the responses and design can be partitioned as $\by^\T  = \left(\by_1^\T ,\by_2^\T \right)^\T $ and $X = \left(X_1^\T ,X_2^\T \right)^\T $. Coherent inference, in this case, means that the Gibbs posterior formed from the complete data, i.e. by using generalised likelihood 
$\exp \left[ - w \ell(\bt; \by, X) \right]$ with prior $\mathcal{P}_{\ell}$ is identical to the Gibbs posterior from first evaluating the Gibbs posterior $\mathcal{G}_{\ell}(\by_1;X_1)$ and then using this as a prior distribution with generalised likelihood $\exp \left[ - w \ell(\bt; \by_2, X_2) \right]$.


The calibration weight, $w$, controls the rate of learning from prior, $\mathcal{P}_{\ell}$, to Gibbs posterior, $\mathcal{G}_{\ell}(\by; X)$. To see this, consider the extreme cases. As $w \to 0$, then $\mathcal{G}_\ell(\by;X) \to \mathcal{P}_{\ell}$, and, as $w \to \infty$, then $\mathcal{G}_\ell(\by;X)$ converges to a point mass at $\hat{\btheta}_\ell(\by; X)$. Therefore, the specification of the calibration weight is crucial for Gibbs inference.

Developing methodology for the automatic specification of the calibration weight is an active area of research \citep[see, e.g.][]{catoni_2007, grunwald_2012, bissiri_etal_2016, holmes_walker_2017, lyddon_etal_2019, jewson_2022}. For example, \cite{bissiri_etal_2016} described various approaches to the specification of the calibration weight. These include (a) choosing $w$ so that the prior distribution $\mathcal{P}_{\ell}$ contributes a fraction (e.g. $1/n$) of the information in the generalised likelihood (b) choosing $w$ to maintain operational characteristics, i.e. so that properties of the Gibbs posterior match those of the classical M-estimators; or (c) allowing $w$ to be unknown with a hierarchical prior distribution. This paper does not advocate for any singular approach for specifying the calibration weight. The Gibbs optimal design framework, as introduced in the next section, does not rely on any particular calibration method and can be used, in principle, with any method.

In summary, the prior distribution, $\mathcal{P}_{\ell}$, for the target parameter values, $\btheta_{\ell}$, is updated to the Gibbs posterior by using information in the generalised likelihood $\exp \left[ - w \ell(\bt; \by, X) \right]$. This is analogous to how the prior on the true parameter values is updated to the Bayesian posterior by using information in the likelihood. Note that the prior distribution for Gibbs inference can be non-informative reflecting a lack of prior knowledge about $\btheta_{\ell}$ in a similar way to how we can specify a non-informative prior on the true parameter values.

%

\subsection{Bayesian inference as Gibbs inference}

The Bayesian posterior distribution as a Gibbs posterior can be obtained under the self-information loss, i.e. the negative log-likelihood
$$\ell_{SI}\left(\bt ; \by, X\right) = - \log \pi\left( \by \vert \bt ; X\right),$$
with $w=1$, i.e. $\mathcal{G}_{SI}(\by;X) = \mathcal{B}(\by;X)$. In this case, the M-estimators, $\hat{\btheta}_{SI}(\by;X)$, are the maximum likelihood estimators. The corresponding expected loss is
$$L_{SI,\mathcal{T}(X)}(\bt) = \mathrm{E}_{\mathcal{T}(X)} \left[ - \log \pi\left( \by \vert \bt ; X\right) \right],$$
which, up to a constant independent of $\bt$, is the Kullback-Leibler (KL) divergence between the true response-generating distribution and the statistical model. Thus the target parameter values, denoted $\btheta_{SI}$, minimise this divergence. Under a correctly specified statistical model, where there exist $\btheta_T$ such that $\mathcal{S}(\btheta_T; X) = \mathcal{T}(X)$, then $\btheta_{SI} = \btheta_T$ with a minimised Kullback-Leibler divergence of zero. Now consider the case of a misspecified statistical model, where there do not exist parameter values $\btheta_T$ such that $\mathcal{S}(\btheta_T; X) = \mathcal{T}(X)$. In this case, Bayesian inference still provides coherent inference about $\btheta_{SI}$.

\section{Gibbs optimal design of experiments} \label{sec:god}

\subsection{Expected utility under the design distribution}

The proposal is to extend the Bayesian optimal design framework to Gibbs statistical inference. Similar to Bayesian optimal design, the Gibbs optimal design framework will involve maximising an expected utility function over the design space $\mathbb{D}$. The challenge is that, with Bayesian optimal design, the expectation of the utility is taken with respect to the responses and true parameter values, $\btheta_T$, under a joint distribution given by the statistical model and prior distribution for $\btheta_T$. Neither, a statistical model nor true parameter values necessarily exist under a Gibbs inference analysis.

We propose that the utility function be evaluated at the target parameter values. As defined in Section~\ref{sec:gibbsinf}, the target parameter values depend on the true response-generating distribution. However, this is unknown, so instead we replace these by the target parameter values under a distribution we term the \emph{designer distribution}. Furthermore, the expectation of the utility function is then taken under this designer distribution to give the objective function.

The purpose of the designer distribution is to represent the unknown true response-generating distribution $\mathcal{T}(X)$, in a similar way to the statistical model $S(\bt, X)$. This prompts the question: why not use the designer distribution as the statistical model? However, note that, post-experiment, the designer distribution will not be used as a statistical model for the observed responses. This means that it does not need to be ``useful", i.e. it does not need to be able to attain the aims of the experiment. Consider the simple example in Section~\ref{sec:intro}. A reasonable designer distribution is the unique-treatment model.  Every unique vector of controllable variables (i.e. treatment) is allowed a unique mean response with no relationship between the mean response at different controllable variables. This should provide a flexible representation of the true response-generating distribution but would be ineffective in learning the relationship between the controllable variables and response.  We derive expected utilities under this designer distribution in Section~\ref{sec:LM}.

Formally, let $\mathcal{D}(\br;X)$ denote the designer distribution, which depends on hyper-variables $\br$. The idea is that we may formulate more than one approximation to $\mathcal{T}(X)$, indexed by different hyper-variables. A hyper-variable distribution, denoted $\mathcal{C}$, is specified for $\br$. Similar to the prior distributions $\mathcal{P}_T$ and $\mathcal{P}_\ell$, the hyper-variable distribution is assumed not to depend on design $X$.

The target parameter values under the designer distribution are
$$\btheta_{\ell,\mathcal{D}}(\br;X) = \arg \min_{\bt \in \Theta} \mathrm{E}_{\mathcal{D}(\br;X)} \left[ w \ell(\bt; \by, X) \right]$$
and are a function of the hyper-variables $\br$ and design $X$.

The utility function is denoted $u_{\mathcal{G}}(\bt, \by,X)$. This can have the same functional form as a utility used under Bayesian optimal design. The difference is dependence on responses $\by$ and design $X$ is via the Gibbs posterior (instead of the Bayesian posterior).

The Gibbs expected utility objective function is then given by
\begin{equation}
U_{\mathcal{G}}(X) = \mathrm{E}_{\mathcal{C}} \left\{ \mathrm{E}_{\mathcal{D}(\br;X)} \left[ u(\btheta_{\ell,\mathcal{D}}(\br;X), \by,X) \right] \right\}.
\label{eqn:expGibbsutil}
\end{equation}
The inner expectation is with respect to the responses, $\by$, under the designer distribution, $\mathcal{D}(\br;X)$ and the outer expectation is with respect to the hyper-variables, $\br$, under $\mathcal{C}$. Note that the utility is evaluated at the target parameter values $\btheta_{\ell,D}(\br;X)$ rather than the true parameter values $\btheta_{T}$. This reflects the fact that the experimental aim is to estimate the target parameter values.

Bayesian optimal design is a special case of Gibbs optimal design, under the self-information loss and the designer distribution being the statistical model. This leads to the target parameters and hyper-variables being the true parameter values. Finally $\mathcal{C}$ is chosen as the prior distribution for the true parameter values. However, this tells us that the use of Bayesian optimal design implicitly assumes that the statistical model is true, i.e. a very strong assumption.

In the literature, we are aware of two modifications of Bayesian optimal design which are also special cases of Gibbs optimal design. Firstly, \cite{EtzioniKadane1993} considers the case where the outer expectation in the Bayesian expected utility (\ref{eqn:exputil}) is taken under a different prior distribution to that used to form the Bayesian posterior. This can represent the scenario where two separate individuals are (a) designing the experiment and (b) analysing the observed responses, and who may have differing prior beliefs on the true parameter values. In this case, the loss is self-information, and the designer distribution coincides with the statistical model. However, $\mathcal{P}_\ell$ is the prior distribution representing the prior beliefs of the individual analysing the observed responses, and $\mathcal{C}$ is the prior distribution representing the prior beliefs of the individual designing the experiment.

\cite{overstall_mcgree_2022} considered a Bayesian optimal design framework whereby the inner expectation in the Bayesian expected utility (\ref{eqn:exputil}), is taken under an alternative model for the responses. The motivation for doing so was to introduce robustness into the Bayesian optimal design process, for example, by making the alternative model more complex than the statistical model. The proposal in this paper is to extend this further to where the inference does not have to be Bayesian.

\subsection{Computational approach} \label{sec:comp}

In general, it will not be possible to find closed form expressions for the Gibbs expected utility, given by (\ref{eqn:expGibbsutil}). This is because the Gibbs posterior distribution and target parameter values will not be available in closed form. This matches how the Bayesian expected utility, given by (\ref{eqn:exputil}), is also typically not available in closed form.

In this section, we outline an exemplar computational approach that can be used to approximately find Gibbs optimal designs. It essentially uses a combination of a normal approximation to approximate the Gibbs posterior and a Monte Carlo integration to approximate the Gibbs expected utility. The approximate Gibbs expected utility is then maximised using the approximate coordinate exchange (ACE; \citealt{overstall_woods_2017}) algorithm. In the literature, the combination of Monte Carlo and normal approximations has been used to find Bayesian designs \citep{long_etal_2013}, including in partnership with ACE \citep{overstall_etal_2018}. It is anticipated that many existing computational approaches used to find Bayesian optimal designs can be repurposed to find Gibbs optimal designs (see Section~\ref{sec:disc} for more discussion). 

\subsubsection{Approximating the expected utility}

Let $\tilde{\btheta}_{\ell}(\by;X) = \arg \max_{\bt \in \Theta} \left[ - w\ell(\bt; \by, X) + \log \pi_{\ell}(\bt) \right]$ denote the Gibbs posterior mode and let 
\begin{eqnarray*}
\tilde{\Sigma}_{\ell}(\by;X) & = & - \left[ - w\frac{\partial \ell(\tilde{\btheta}_{\ell}(\by;X); \by, X)}{\partial \bt \partial \bt^\T } \right. \nonumber \\ 
& & \left. + \frac{\partial \log \pi_{\ell}(\tilde{\btheta}_{\ell}(\by;X))}{\partial \bt \partial \bt^\T } \right]^{-1},
\end{eqnarray*}
denote the negative inverse Hessian matrix of the log Gibbs posterior pdf evaluated at the Gibbs posterior mode. Then the Gibbs posterior distribution is approximated by a normal distribution with mean $\tilde{\btheta}_{\ell}(\by;X)$ and variance $\tilde{\Sigma}_{\ell}(\by;X)$. This is analogous to normal approximations used for Bayesian posterior distributions \citep[see, e.g.][page 237]{ohaganforster2004}. In the examples in Section~\ref{sec:examples}, we use a quasi-Newton method to numerically evaluate the Gibbs posterior mode. 

Now, due to the tractability of the normal distribution, approximations for many utility functions ensue. We denote such approximations by  $\tilde{u}(\bt, \by, X)$. For example, approximations for the NSE and SH utilities are
\begin{eqnarray*}
& & \tilde{u}_{NSE}(\bt, \by, X) = - \Vert \bt - \tilde{\btheta}_{\ell}(\by;X) \Vert_2^2,\\
& & \tilde{u}_{SH}(\bt, \by, X)  =  - \frac{p}{2} \log \left(2 \pi \right) - \frac{1}{2} \log \vert \tilde{\Sigma}_{\ell}(\by;X) \vert \\
& & - \frac{1}{2} \left( \bt - \tilde{\btheta}_{\ell}(\by;X)\right)^\T  \tilde{\Sigma}_{\ell}(\by;X)^{-1} \left( \bt - \tilde{\btheta}_{\ell}(\by;X)\right),
\end{eqnarray*}
respectively.

A Monte Carlo approximation to the expected utility is then given by
$$
\hat{U}_G(X) = \frac{1}{B} \sum_{b=1}^B \hat{u}(\btheta_{\ell, D}(\br_b,X), \by_b, X),
$$
where $\left\{ \br_b, \by_b \right\}_{b=1}^B$ is a sample from the joint distribution of responses $\by$ and hyper-variables $\br$ implied by designer distribution $\mathcal{D}(\br, X)$ and $\mathcal{C}$. 

For clarity, the following algorithm can be used to form the Monte Carlo approximation to the expected Gibbs utility.

\begin{enumerate}[leftmargin=*]
\item
Inputs are design, $X$; loss function, $\ell(\bt; \by,X)$; designer distribution, $\mathcal{D}(\br;X)$, hyper-variable distribution, $\mathcal{C}$; and Monte Carlo sample size $B$.
\item
For $b = 1,\dots,B$, complete the following steps.
\begin{enumerate}
\item
Generate hyper-variables $\br_b \sim \mathcal{C}$.
\item
Determine target parameter values under designer distribution $\mathcal{D}(\br_b;X)$, i.e.
$$\btheta_{\ell,\mathcal{D}}(\br_b;X) = \arg \min_{\bt \in \Theta} \mathrm{E}_{\mathcal{D}(\br_b;X)}\left[\ell(\bt; \by,X)\right].$$
\item
Generate responses $\by_b \sim \mathcal{D}(\br_b;X)$.
\item
Find Gibbs posterior mode
$$
\tilde{\btheta}(\by_b;X) = \arg \max_{\bt \in \Theta} \left[ - w\ell(\bt; \by_b, X) + \log \pi_{\ell}(\bt) \right]
$$
and 
\begin{eqnarray*}
& & \tilde{\Sigma}_{\ell}(\by_b;X) = - \left[ - w\frac{\partial \ell(\tilde{\btheta}_{\ell}(\by_b;X); \by_b, X)}{\partial \bt \partial \bt^\T } \right.\\
& & + \left. \frac{\partial \log \pi_{\ell}(\tilde{\btheta}_{\ell}(\by_b;X))}{\partial \bt \partial \bt^\T } \right]^{-1}.
\end{eqnarray*}
\item
Using the normal distribution 
$$\mathrm{N}\left[\tilde{\btheta}_{\ell}(\by_b;X), \tilde{\Sigma}_{\ell}(\by_b;X)\right]$$
as an approximation to the Gibbs posterior, form the approximation to the utility 
$$\tilde{u}_b = \tilde{u}_{\mathcal{G}}\left[\btheta_{\ell,\mathcal{D}}(\br_b;X), \by_b, X\right].$$
\end{enumerate}
\item
The Monte Carlo approximation to the Gibbs expected utility is
$$\tilde{U}_{\mathcal{G}} = \frac{1}{B}\sum_{b=1}^B \tilde{u}_b.$$
\end{enumerate}

In Step 2(b), we are required to find the target parameter values $\btheta_{\ell,\mathcal{D}}(\br_b;X)$ under the designer distribution $\mathcal{D}(\br_b;X)$. This is because the utility function in the inner expectation of the expected Gibbs utility (\ref{eqn:expGibbsutil}) is evaluated at the target parameter values. How this is accomplished will depend on the application but will typically require a numerical optimisation. We address this issue for each example in Section~\ref{sec:examples}.

\subsubsection{Maximising the approximate expected utility}

Unfortunately, the Monte Carlo approximation to the Gibbs expected utility, given by (\ref{eqn:exputil}) is not straightforward to maximise to yield the Gibbs optimal design. This is because it is computationally expensive to evaluate (one evaluation requires $B$ numerical maximisations to find the Gibbs posterior mode, and possible $B$ further numerical maximisations to find the target parameter values) and it is stochastic. To solve this problem, we use the ACE algorithm. Very briefly, ACE uses a cyclic descent algorithm (commonly called coordinate exchange in the design of experiments literature: see, e.g., \citealt{meyer_nachtsheim_1995}) to maximise a Gaussian process prediction of the expected utility, sequentially over each one-dimensional element of the design space. For more details on the ACE algorithm, see \cite{overstall_woods_2017}. 

While many methods could be used to maximise the approximate Gibbs expected utility, the ACE algorithm has been chosen due to its compatibility with any Monte Carlo approximation to the expected utility and an implementation being readily availability in software \citep{overstall_etal_2020}. 

\section{Linear models} \label{sec:LM}

\subsection{Introduction} \label{sec:lmintro}

In this section we consider Gibbs optimal design for the special case of linear models. The design of experiments for linear models is a significant topic in the literature. For textbook treatments, see, for example, \cite{atkinson_etal_2007}, \cite{goos_jones_2011} and \cite{morris_2011}. For specific treatments of robust design of experiments for linear models, see, for example, \cite{cook_nachtsheim_1982}, \cite{li_nachtsheim_2000}, \cite{bingham_chipman_2007}, \cite{tsai_gilmour_2010} and \cite{smucker_drew_2015}. One difference between the proposed framework and these papers, is that these papers attempt to find designs for a set of models. On the other hand, we acknowledge that the model we are fitting is incorrect and attempt to find a design to allow the fitted model to be as close to the truth as possible. In this way, our work is closer in spirit to, for example, \cite{waite_woods_2022} and \cite{azriel_2023}.

Suppose $y$ is a continuous response and the true data-generating process $\mathcal{T}(X)$ has
$$y_i = \mu(\bx_i) + e_i,$$
for $i=1,\dots,n$, where $\bx_i \in \mathbb{X} = [-1,1]^k$, and $e_1,\dots,e_n$ are independent and identically distributed random errors with mean zero and variance $\sigma^2 < \infty$. The true mean response, $\mu(\bx)$, for controllable variables $\bx$ is unknown and is approximated by a linear function $\bff(\bx)^\T  \bt$ where $\bff : \mathbb{X} \to \mathbb{R}^p$ is a specified regression function. 


Bayesian optimal design proceeds by assuming there exist $\btheta_T$ such that $\bff(\bx)^\T  \btheta_T = \mu(\bx)$ for all $\bx \in \mathbb{X}$. Under an improper uniform prior distribution, $\mathcal{P}_T$, for $\btheta_T$, the Bayesian expected NSE and SH utilities are proportional to the following two objective functions
\begin{equation} \label{eqn:classical}
\begin{split}
U^*_{\mathcal{B},NSE}(X) &=-\mathrm{tr}\left[ \left(F^\T  F\right)^{-1}\right]h_1(d) \\
U^*_{\mathcal{B},SH}(X) &= \log \left\vert F^\T  F \right\vert,
\end{split}
\end{equation}
respectively, where $F$ is an $n \times p$ matrix with $i$th column given by $\bff(\bx_i)^\T $, for $i=1,\dots,n$. The Bayesian optimal designs found by maximising the objective functions in (\ref{eqn:classical}) are also known as A- and D-optimal, respectively. These are commonly considered in the classical design of experiments literature.

We now consider Gibbs optimal design for the linear model for two different loss functions: sum of squares (Section~\ref{sec:SS}) and $L^2$ loss (Section~\ref{sec:L2}). In both cases, we specify the designer distribution to be the same as $\mathcal{T}(X)$ above. For the distribution of $e_1,\dots, e_n$, we specify a scale mixture of normal distributions, i.e. $e_i \vert \kappa \sim \mathrm{N}(0,\kappa)$, where $\kappa$ is a random variable with $\mathrm{E}_{\mathcal{C}}(\kappa) = \sigma^2$. When considering the marginal distribution of $e_i$, a scale mixture of normal distributions results in a wide range of symmetric, uni-modal distributions \citep[e.g.][]{west_1987} thus the designer distribution exhibits a large amount of flexibility. The hyper-variables are $\br = \left(\mu(\cdot), \sigma^2, \kappa\right)$.

\subsection{Sum of squares loss} \label{sec:SS}

Consider the sum of squares (SS) loss function
$$\ell_{SS}(\bt; \by, X) = \sum_{i=1}^n \left[ y_i - \bff(\bx_i)^\T  \bt\right]^2.$$
Under the SS loss, the M-estimators are the familiar least squares estimators $\hat{\btheta}_{SS} (\by; X)= \left(F^\T  F \right)^{-1} F^\T  \by$. 

We consider two different specifications for the calibration weight, $w$. In both cases, we specify a uniform prior distribution for the target parameter values, $\btheta_{SS}$.

\subsubsection{Fixed calibration weight based on maintaining characteristics} \label{sec:SSfixed}

Under a fixed calibration weight, the Gibbs posterior distribution for $\btheta_{SS}$ is 
$$\mathrm{N}\left[ \hat{\btheta}_{SS}(\by;X), \frac{1}{2w} \left(F^\T  F \right)^{-1}\right].$$
The calibration weight is chosen so that the Gibbs posterior variance is equal to the variance of $\hat{\btheta}_{SS}(\by; X)$. Therefore, $w=1/2\tilde{\sigma}^2$, where $\tilde{\sigma}^2$ is an estimate of the error variance under the unique-treatment model \citep[e.g.][]{gilmourtrinca2012}. Under this model, each unique treatment induces a unique mean response. Specifically, let $\bar{\bx}_1, \dots, \bar{\bx}_q$ be the unique treatments, i.e. the unique values of $\bx_1, \dots, \bx_n$, where $q \le n$. If $q<n$, i.e. there is at least one repeated treatment, then an estimator of $\sigma^2$ is given by
$$\tilde{\sigma}^2 = \frac{1}{n-q} \by^\T  \left(I_n - H_Z\right) \by,$$
where $H_Z = Z \left(Z^\T Z\right)^{-1} Z^\T $ and $Z$ is the $n \times q$ matrix with $ij$th element $Z_{ij} = 1$ if $\bx_i = \bar{\bx}_j$ and zero otherwise, for $i=1,\dots,n$ and $j=1,\dots,q$. This estimator is independent of the choice of regression function $\bff(\bx)$.

Under the designer distribution described in Section~\ref{sec:lmintro}, the target parameter values are 
\begin{equation}
\btheta_{SS, \mathcal{D}} (\br; X)= \left(F^\T  F \right)^{-1} F^\T  \bmu
\label{eqn:targetSS}
\end{equation}
where $\bmu = \left(\mu(\bx_1),\dots,\mu(\bx_n)\right)^\T $. See Section~\ref{sec:SM1} of the Appendix for a justification of (\ref{eqn:targetSS}). 

The Gibbs expected NSE and SH utilities are proportional to the following two objective functions
\begin{equation} \label{eqn:obj1}
\begin{split}
U^*_{\mathcal{G},NSE}(X) &=-\mathrm{tr}\left[ \left(F^\T F\right)^{-1}\right]h_1(d) \\
U^*_{\mathcal{G},SH}(X) &=- ph_2(d) + \log \left\vert F^\T F \right\vert,
\end{split}
\end{equation}
where $d=n-q$ is the pure error degrees of freedom \citep{gilmourtrinca2012}, $h_1(d) = 1$ if $d>0$ and $\infty$ if $d=0$, and $h_2(d) = \psi(d/2) - \log d + d/(d-2)$, if $d>0$ and $\infty$ if $d=0$, with $\psi(\cdot)$ the digamma function. Justification of the expressions in (\ref{eqn:obj1}) is given in Section~\ref{sec:SM2} of the Appendix.

The objective functions in (\ref{eqn:obj1}) can be seen as modifications of the objective functions for A- and D-optimality given in (\ref{eqn:classical}), respectively. In each case, the modification incentivises increasing $d$. In $U^*_{\mathcal{G},NSE}(X)$, the incentive given by $h_1(d)$ ensures the estimate $\tilde{\sigma}^2$ (and therefore the Gibbs posterior) exists. A corollary is that if the A-optimal design has at least one repeated design point, then the Gibbs optimal design under the NSE utility will be the A-optimal design. Although replication is a core principle of design of experiments \citep[e.g.][Section 1.2.4]{morris_2011}, whether or not an optimal design replicates design points is solely based on the objective function, through the balance of exploitation (replication) versus exploration (unique treatments). In $U^*_{\mathcal{G},SH}(X)$, the incentive for increasing $d$ is more significant. 

We consider a numerical example in Section~\ref{sec:SSexample}.

\subsubsection{Random calibration weight} \label{sec:SSrandom}

We now allow $w$ to be random. The target parameter values are unchanged from $\btheta_{SS, \mathcal{D}} (\br; X)$ in Section~\ref{sec:SSfixed}. We specify a prior distribution for the calibration weight given by $\pi(w) \propto w^{n/2 - 1}$. This choice of prior distribution results in the marginal Gibbs posterior distribution for $\btheta_{SS, \mathcal{D}} (\br; X)$ being the same as the corresponding Bayesian posterior distribution based on normally distributed errors. However, the formation of the Gibbs posterior does not require the assumption of normal errors. Specifically, the marginal Gibbs posterior distribution is a multivariate t-distribution \citep[e.g.][page 1]{kotznadarajah2004} with mean $\hat{\btheta}_{SS} (\by; X)$, scale matrix $\hat{\sigma}^2 \left(F^\T F\right)^{-1}$ and $n-p$ degrees of freedom, where $\hat{\sigma}^2 = \by^\T  \left(I_n - H_F\right)\by/(n-p)$ and $H_F = F\left(F^\T F\right)^{-1}F^\T $. Note that $\hat{\sigma}^2$ is the familiar regression function-dependent estimate of the error variance.

The Gibbs expected NSE and SH utilities are proportional to the following two objective functions
\begin{equation} \label{eqn:obj2}
\begin{split}
U^*_{\mathcal{G},NSE}(X) &=-\mathrm{tr}\left[ \left(F^\T F\right)^{-1}\right]h_1(d) \\
U^*_{\mathcal{G},SH}(X) &= \log \left\vert F^\T F \right\vert  - \mathrm{E}_{\mathcal{C}} \left( \mathrm{E}_{\mathcal{D}(\br, X)} \left\{ u_{SH}^*(\by; \br, X) \right\}\right),
\end{split}
\end{equation}
respectively, where 
$$u^*_{SH}(\by; \br, X) = p \log \hat{\sigma}^2 + n \log \left[ 1 + \frac{(n-p) \Vert \by - \bmu \Vert_{H_F}}{\hat{\sigma}^2} \right].$$
Note that the Gibbs optimal design under the NSE utility will be the A-optimal design. Similar to Section~\ref{sec:SSfixed}, the objective function under the SH utility is a modification of the objective function for D-optimality. The modification, $\mathrm{E}_{\mathcal{C}} \left( \mathrm{E}_{\mathcal{D}(\br, X)} \left\{ u_{SH}^*(\by; \br, X) \right\}\right)$, is not available in closed form.

Before we consider a numerical example, we reiterate our position from Section~\ref{sec:gibbsinf} that we do not advocate either of these two approaches for specifying the calibration weight. However, in Section~\ref{sec:SM3} of the Appendix we present the results of a simulation study which suggests that the fixed calibration weight of Section~\ref{sec:SSfixed} should be favoured based on the coverage and precision of probability intervals for the target parameter values.

\subsubsection{Numerical example} \label{sec:SSexample}

To illustrate the concepts, we consider a numerical example from \cite{gilmourtrinca2012}, where $n=16$, $k=3$ and 
$$\bff(\bx) = \left(1, x_1,x_2,x_3,x_1^2,x_2^2,x_3^2,x_1x_2,x_1x_3,x_2x_3\right)^\T $$
implementing an intercept, main and quadratic effects, and two-way interactions resulting in $p=10$.

For this setup, the A-optimal design has $d=2$, i.e. two repeated design points, meaning the Gibbs optimal design, under the NSE utility and fixed calibration weight is the A-optimal design. Therefore, we focus on Gibbs optimal designs under the SH utility. To find the design under the random calibration weight, for which the Gibbs expected utility is not available in closed form, we use the computational approach described in Section~\ref{sec:comp} but where the normal approximation is not required since the Gibbs posterior is available in closed form as a multivariate t-distribution. For this design, we also need to specify a hyper-variable distribution $\mathcal{C}$ for $\mu(\cdot)$, $\sigma^2$ and $\kappa$. We let $\mu(\cdot)$ be a realisation of a zero-mean Gaussian process with Mat{\'e}rn correlation function. Let $\bar{\bmu} = \left(\mu(\bar{\bx}_1),\dots,  \mu(\bar{\bx}_q)\right)^\T $ be the unique mean responses, with $\bmu = Z\bar{\bmu}$, then
\begin{equation}
\bar{\bmu} \sim \mathrm{N}\left(\bzero_q, \tau^2 A \right).
\label{eqn:GP}
\end{equation}
In (\ref{eqn:GP}), the $ij$th element of $A$ is $A_{ij}  =  c_1(\bx_i,\bx_j; \brho)$ with 
\begin{eqnarray*}
c_1(\bx_i,\bx_j; \brho)  &=&  \prod_{z=1}^k \left[ \left(1 + \rho \vert \bar{x}_{iz} - \bar{x}_{jz} \vert \right) \right. \\
& & \left. \qquad \times \exp \left( - \rho \vert \bar{x}_{iz} - \bar{x}_{jz} \vert \right) \right],
\end{eqnarray*}
being the Mat{\'e}rn correlation function. To complete, we let $\kappa$ be exponentially distributed with mean $\sigma^2 = 1$ and set $\rho_z = 1$ (for $z=1,\dots,k$), and $\tau^2 = 1$.

Table~\ref{tab:SS} shows the values of the objective functions (and efficiencies), under each design, for each design. The Gibbs optimal designs have identical performance under the two Gibbs expected utilities, indicating that the approach is invariable to the two different specifications of the calibration weight. These designs also have relatively good performance (85\% efficiency) compared to the D-optimal design. On the other hand, the D-optimal design has $d=0$ leading to a Gibbs expected utility, under fixed $w$, of $-\infty$ because the Gibbs posterior does not exist. The performance of the D-optimal design when compared against the Gibbs optimal design, under random $w$, is reasonable (69\% efficiency). Table~\ref{tab:SS} also shows the pure error degrees of freedom, $d$, for each design. Note that the Gibbs optimal designs have $d=6$, leading to $q=p=10$ unique design points. This is the minimum possible under a non-informative prior distribution for the target parameter values meaning that replication is as high as possible.

\begin{table}
\caption{For the squared error loss function, values of the objective function (and efficiencies), under each design, and for each design.} \label{tab:SS}
\begin{tabular}{llll} \hline
 & \multicolumn{3}{c}{Design} \\ \cline{2-4}
 & Fixed $w$ & Random $w$ & D-optimal \\
 & (Section~\ref{sec:SSfixed}) & (Section~\ref{sec:SSrandom}) & \\ \hline
Fixed $w$ & 11.95 & 11.95 & $-\infty$ \\
& (100\%) & (100\%) & (0\%) \\ \hline
Random $w$ & 15.73 & 15.73 & 12.05 \\
& (100\%) & (100\%) & (69\%) \\ \hline
D-optimal & 18.26 & 18.26 & 19.92 \\
& (85\%) & (85\%) & (100\%) \\ \hline
Pure error degrees & 6 & 6 & 0 \\
of freedom, $d$ & & & \\ \hline
\end{tabular}
\end{table}

\subsection{$L^2$ loss} \label{sec:L2}

\subsubsection{Derivation} \label{sec:L2deriv}

We now consider an alternative loss inspired by \cite{tuo_wu_2015b}. The loss is given by a quadratic approximation to $\int_{\mathbb{X}} \left[ \bar{\mu}(\bx; \by,X) - \bff(\bx)^\T  \btheta \right]^2 \mathrm{d}\bx$, where $\bar{\mu}(\bx; \by,X)$ is a non-parametric prediction of $\mu(\bx)$ based on $\by$ and $X$. Specifically, the loss is given by
\begin{equation}
\ell_{L^2}(\bt; \by, X) = \sum_{m=1}^M s_m \left[ \bar{\mu}(\bchi_m; \by,X) - \bff(\bchi_m)^\T  \bt \right]^2
\label{eqn:L2loss}
\end{equation}
where $\left\{ s_m, \chi_n\right\}_{m=1}^M$ are the quadrature weights and nodes, respectively, with $M$ being the number of quadrature nodes. If $\bar{\mu}(\bx; \by,X)$ is an unbiased predictor and the quadrature is exact, the target parameter values minimise the $L^2$ norm of the difference between $\mu(\bx)$ and $\bff(\bx)^\T \bt$. For this reason, the loss given by (\ref{eqn:L2loss}) is referred to as the $L^2$ loss.

For $\bar{\mu}(\bx;\by,X)$, we use the predictive mean from a non-zero nugget, zero-mean Gaussian process. We assume
$$\by \sim \mathrm{N}\left[\bzero_n, \phi^2 \left( \nu I_n + B \right) \right],$$
where $\nu>0$ is the nugget and the $ij$th element of $B$ is
\begin{eqnarray*}
B_{ij} &=& c_2(\bx_i,\bx_j; \balpha)\\
& = & \exp \left[ - \sum_{z=1}^k \alpha_z \left(x_{iz}-x_{jz}\right)^2 \right],
\end{eqnarray*}
for $i,j = 1,\dots,n$. The parameters $\phi^2$, $\nu$ and $\balpha = \left(\alpha_1,\dots,\alpha_k\right)^\T $ are estimated by leave-one-out cross validation, and these estimates are denoted $\bar{\phi}^2$, $\bar{\nu}$ and $\bar{\balpha}$, respectively. The predictive mean is given by
$$\bar{\mu}(\bx;\by,X) = \bb(\bx; \bar{\alpha})^\T  \bar{V} \by,$$
where $\bb(\bx; \bar{\alpha}) = \left[ c_2(\bx_1,\bx; \bar{\alpha}),\dots,c_2(\bx_n,\bx; \bar{\alpha})\right]^\T $, the $ij$th element of $\bar{B}$ is $c_2(\bx_i,\bx_j; \bar{\balpha})$, and $\bar{V} = \left(\bar{\nu}I_n + \bar{B}\right)^{-1}$.

The M-estimators are given by minimising (\ref{eqn:L2loss}) with respect to $\bt$ and are
$$\hat{\btheta}_{L^2}(\by;X) = E_Q^{-1} F_Q^\T  \bar{D}_Q \bar{V}\by,$$
where $E_Q = \sum_{m=1}^M s_m \bff(\bchi_m)\bff(\bchi_m)^\T $, $F_Q$ is an $M \times p$ matrix with $m$th row $s_m \bff(\bchi_m)^\T $ and $\bar{D}_Q$ is an $M \times n$ matrix with $mi$th element $c_2(\bchi_m,\bx_i; \bar{\balpha})$, for $m=1,\dots,M$ and $i=1,\dots,n$.

We consider fixing $w$ to maintain operational characteristics. Under a fixed $w$, the Gibbs posterior distribution of the target parameter values is 
\begin{equation}
\mathrm{N}\left[\hat{\btheta}_{L^2}(\by;X), \frac{1}{2w}E_Q^{-1} \right].
\label{eqn:L2post}
\end{equation}
We specify $w$ such that the trace of the Gibbs posterior variance is approximately equal to the trace of the variance of $\hat{\btheta}_{L^2}(\by;X)$. Specifically, 
$$w = \frac{\mathrm{tr} \left[ E_Q^{-1} \right]}{2 \bar{\sigma}^2 \mathrm{tr}\left[ E_Q^{-1} F_Q^\T  \bar{D}_Q \bar{V}\bar{V}\bar{D}_Q^\T  F_Q E_Q^{-1} \right]},$$
where
$$\bar{\sigma}^2 = \frac{ \left\Vert \left(I_n - \bar{B}\bar{V} \right)\by \right\Vert_{I_n}}{\mathrm{tr}\left\{ \left( I_n - \bar{V} \bar{B}^\T\right)\left( I_n - \bar{B}\bar{V} \right)\right\}},$$
is an estimate of the error variance under the Gaussian process prediction.

The target parameter values, $\btheta_{L^2, \mathcal{D}}(\br; X)$, are given by minimising
$$\mathrm{E}_{\mathcal{D}}(\br; X)\left[ w \ell_{L^2}(\bt; \by, X) \right],$$
with respect to $\bt$. These values are not available in closed form, so we approximate, via the delta method, to obtain
$$\btheta_{L^2, \mathcal{D}}(\br; X) = E_Q^{-1} F_Q^\T  \bar{D}_Q \bar{V}\bmu.$$

The Gibbs expected NSE and SH utilities give the following two objective functions
\begin{equation} \label{eqn:obj3}
\begin{split}
U^*_{\mathcal{G},NSE}(X) &=  \mathrm{E}_{\mathcal{C}} \left\{ \mathrm{E}_{\mathcal{D}(\br;X)} \left[ u^*_{NSE}(\by; \br, X) \right] \right\} \\
U^*_{\mathcal{G},SH}(X) &= \log \pi_{\mathcal{G}} \left[ \btheta_{L^2, \mathcal{D}}(\br; X) \vert \by, X\right]
\end{split}
\end{equation}
where $u^*_{NSE}(\by; \br, X) = \Vert \btheta_{L^2, \mathcal{D}}(\br; X) -  \hat{\btheta}_{L^2}(\by;X) \Vert$ and $\pi_{\mathcal{G}} \left[ \bt \vert \by, X\right]$ is the pdf of (\ref{eqn:L2post}). Neither of the two expressions in (\ref{eqn:obj3}) are available in closed form.

\subsubsection{Numerical example} \label{sec:L2example}

We return to the numerical example in Section~\ref{sec:SSexample} using the same designer distribution. Note that the correlation functions under the designer distribution and the Gaussian process prediction, $\bar{\mu}(\bx;\by,X)$, are different. 

For the quadrature, we use a Gauss-Legendre scheme \citep{mvquad} with $M=1000$. Since the two objective functions given by (\ref{eqn:obj3}) are not available in closed form we use the computational methodology of Section~\ref{sec:comp} but where the normal approximation is not required since the Gibbs posterior is available in (\ref{eqn:L2post}). 

Figure~\ref{fig:LM_fig_L2} shows two-dimensional projections of the Gibbs optimal designs under the NSE and SH utilities. They appear similar to space-filling designs, which motivates a comparison with a maximum projection Latin hypercube design \citep{joseph_etal_2015}. Therefore, also shown in Figure~\ref{fig:LM_fig_L2}, are two-dimensional projections of a maximum projection Latin hypercube design (LHD) with $n=16$ and $k=3$ (where the design points are scaled to $\mathbb{X}$). Inspection of Figure~\ref{fig:LM_fig_L2} shows that the Gibbs optimal designs typically have points closer to the extremes of $\mathbb{X}$ than the LHD. Table~\ref{tab:L2} shows the values (and efficiencies) of the objective function under each design. The LHDs peform reasionably well compared to the Gibbs optimal designs (efficiencies of 70-75\%). The Gibbs optimal design under the SH utility performs well (88\%) when compared against the design under the NSE utility. However, the converse is not true: the Gibbs optimal design under the NSE utility has only 45\% efficiency under the expected SH utility function. Figure~\ref{fig:LM_fig_L2} shows that the NSE design has clusters of designs points and it appears that these result in the observed lack of SH-efficiency.

\begin{figure}
    \includegraphics[scale=0.9]{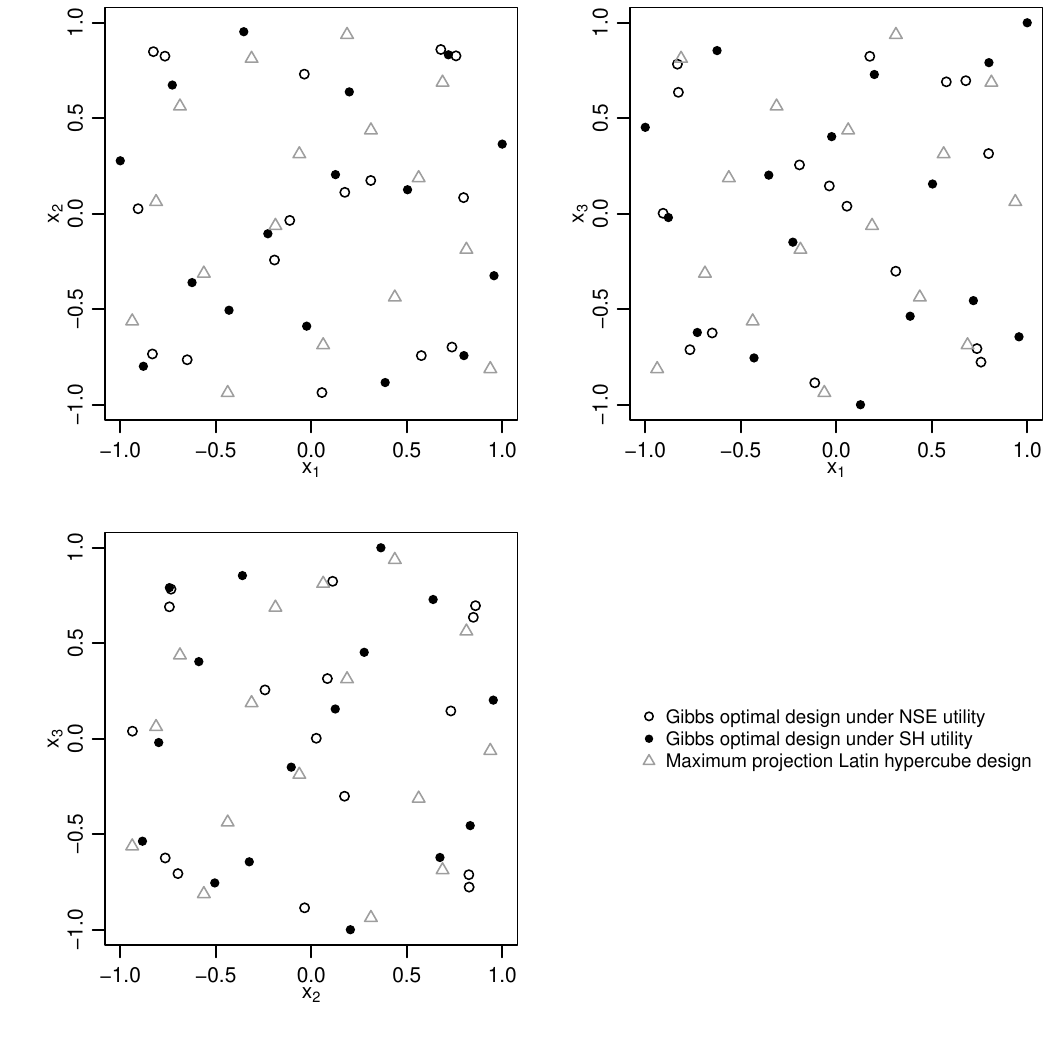}
    \caption{Plots of the Gibbs optimal designs, under the $L^2$ loss, and maximum projection Latin hypercube design.}
    \label{fig:LM_fig_L2}
\end{figure}

\begin{table}
\caption{For the $L^2$ loss function, values of the objective function (and efficiencies), under each Gibbs expected utility, and for each design, including the maximum projection Latin hypercube design (LHD).} \label{tab:L2}
\begin{tabular}{llll} \hline
 & \multicolumn{3}{c}{Design} \\ \cline{2-4}
 & NSE & SH & LHD \\
 & utility & utility & \\ \hline
NSE utility & -0.6950 & -0.7882 & -0.9191 \\
& (100\%) & (88\%) & (75\%) \\ \hline
SH utility & -12.21 & -4.299 & -7.932 \\
& (45\%) & (100\%) & (70\%) \\ \hline
\end{tabular}
\end{table}

\section{Examples} \label{sec:examples}

\subsection{Count responses and overdispersion} \label{sec:count}

In this example, the aim of the experiment is to learn the relationship between controllable variables and count responses. Design of experiments for count responses has previously been considered by, for example, \citet[Chapter 8]{myers_etal_2010}, \citet[Chapter 5]{russell_2019} and \cite{mcgree_eccleston_2012}. The typical approach is to assume the responses are realisations of a Poisson distribution. Accordingly, a Bayesian optimal design can then be found. Specifically, we may assume
\begin{eqnarray}
y_i & \sim & \mathrm{Poisson}\left[ \exp \left(\eta_i \right)\right] \label{eqn:poisson}\\
\eta_i & = & \bff(\bx_i)^\T  \bt, \label{eqn:poissoneta}
\end{eqnarray}
with $\bt = \btheta_T$, along with a prior distribution, $\mathcal{P}_T$, for $\btheta_T$. In this example, we suppose that $\bff(\bx) = \left(1,x_1,x_2\right)^\T $, i.e. an intercept and main effects for $k=2$ controllable variables, and $\mathbb{X} = [-1,1]^2$ 

However, assuming a Poisson distribution is a strong assumption. For example, one property of the Poisson distribution is that its mean and variance are equal, $\mathrm{var}(y_i) = \exp \left(\eta_i \right)$. In practice, this equality of mean and variance can often fail, a phenomenon known as over-dispersion \citep[see, e.g., ][Section 4.5]{pawitan_2013}.  

In the frequentist literature, a commonly-used approach is estimation via quasi-likelihood \citep[see, e.g., ][pages 512-517]{davison2003}. This does not require the assumption of a specified probability distribution for the response and allows the variance to be equal to the mean multiplied by a dispersion parameter $\phi$. 

The quasi log-likelihood is proportional to the log-likelihood under the statistical model given by (\ref{eqn:poisson}), but multiplied by a factor of $1/\phi$, i.e.
$$l_{QL}(\bt;\by,X) = \frac{1}{\phi} \sum_{i=1}^n \left[y_i \eta_i - \exp \left(\eta_i \right)\right],$$
where $\eta_i$ is given by (\ref{eqn:poissoneta}). This means that the maximum quasi-likelihood estimators are identical to the maximum likelihood estimators but their asymptotic variance is inflated by a factor of $\phi$, to account for over-dispersion. The dispersion parameter is estimated from the responses \citep[see, e.g.][page 483]{davison2003}. 

We can use Gibbs inference using the negative quasi log-likelihood as the loss function (but removing the factor of $1/\phi$), i.e.  
$$
\ell_{QL}(\bt; \by, X) = \sum_{i=1}^n \left[\exp \left(\eta_i \right) - y_i \eta_i\right],
$$
where $\eta_i$ is given by (\ref{eqn:poissoneta}). The calibration weight, $w$, actually corresponds to $1/\phi$. We specify $w$ to be the reciprocal of the estimator of $\phi$, i.e.
$$w = \frac{n-p}{\sum_{i=1}^n \left[ y_i - \exp(\hat{\eta}_i) \right]^2/\exp(\hat{\eta}_i)},$$
where $\hat{\eta}_i = \bff(\bx_i)^\T  \hat{\btheta}_{QL}$ with
$$\hat{\btheta}_{QL} = \arg \min_{\bt \in \Theta}\ell_{QL}(\bt; \by, X) = \arg \max_{\bt \in \Theta}l_{QL}(\bt; \by, X),$$
being the maximum (quasi) likelihood estimators.

For the designer distribution, we modify the unique-treatment model from Section~\ref{sec:LM} to the count response scenario. We suppose 
$$y_i \sim \mathrm{neg-bin}\left(\mu_i, \alpha_i \right),$$
for $i=1,\dots,n$, where $\mathrm{neg-bin}(\mu,\alpha)$ denotes a negative binomial distribution with mean $\mu$ and variance $\mu + \mu^2/\alpha$. The mean for the $i$th response is given by
\begin{equation}
\log \mu_i = \bff(\bx_i)^\T  \bbeta + \bz_i^\T  \btau,
\label{eqn:poissonmean}
\end{equation}
where $\bz_i = \left(Z_{i1}, \dots, Z_{iq}\right)^\T $ is the $q \times 1$ vector identifying which of the $q$ unique treatments is subjected to the $i$th run, and $\btau = \left(\tau_1,\dots,\tau_q\right)^\T $ is the $q \times 1$ vector of unique-treatment effects. The mean is chosen as (\ref{eqn:poissonmean}) so that the elements of $\btau$ represent the discrepancy between the true mean and the assumed mean under the quasi-likelihood specification (on the log scale). This allows a type of parity when we compare Bayesian and Gibbs optimal designs later in this sub-section.  

The scale parameter is $\alpha_i = \mu_i^2/(\kappa \mu_i - \mu_i)$, which results in $\mathrm{var}(y_i) = \kappa \mu_i$. Thus $\kappa>1$ controls the extent of over-dispersion.

Under the designer model, the hyper-variables are $\br = \left(\bbeta, \btau, \kappa\right)$. The target parameter values are
$$\btheta_{\ell_{QL},D}(X,\bbeta, \kappa,\btau) = \arg \min_{\bt \in \Theta} \sum_{i=1}^n \left[\exp \left(\eta_i \right) - \mu_i \eta_i\right],$$
where $\eta_i$ is given by (\ref{eqn:poissoneta}) and $\mu_i$ is given by (\ref{eqn:poissonmean}). Firstly, $\btheta_{\ell_{QL},D}(X,\bbeta,\theta,\btau)$ are independent of $\kappa$. Second, the target parameter values, $\btheta_{\ell_{QL},D}(X,\bbeta,\theta,\btau)$, under the designer distribution are the maximum likelihood likelihood estimators under a Poisson GLM. Therefore, in Step 2.(a) of the algorithm to approximate the Gibbs expected utility, in Section~\ref{sec:comp}, we can readily use Fisher scoring to calculate the target parameter values.

We find Bayesian and Gibbs optimal designs, for $n=10,20,\dots,50$, under the NSE utility function. For the Bayesian optimal design, we assume the prior distribution, $\mathcal{P}_T$, is such that the elements of $\btheta_T$ are independent, each having a $\mathrm{N}(0,1)$ distribution. For the Gibbs optimal design, for the hyper-variable distribution, we assume $\bbeta$, $\btau$ and $\kappa$ are independent. We assume the elements of $\btau$ are independent, each having a $\mathrm{U}\left(-2,2\right)$ distribution, and suppose that the elements of $\bbeta$ are independent, each having a $\mathrm{N}(0,1)$ distribution.  We specify $\kappa \sim U(1,5)$. 

\begin{figure}
    \includegraphics[scale=0.9]{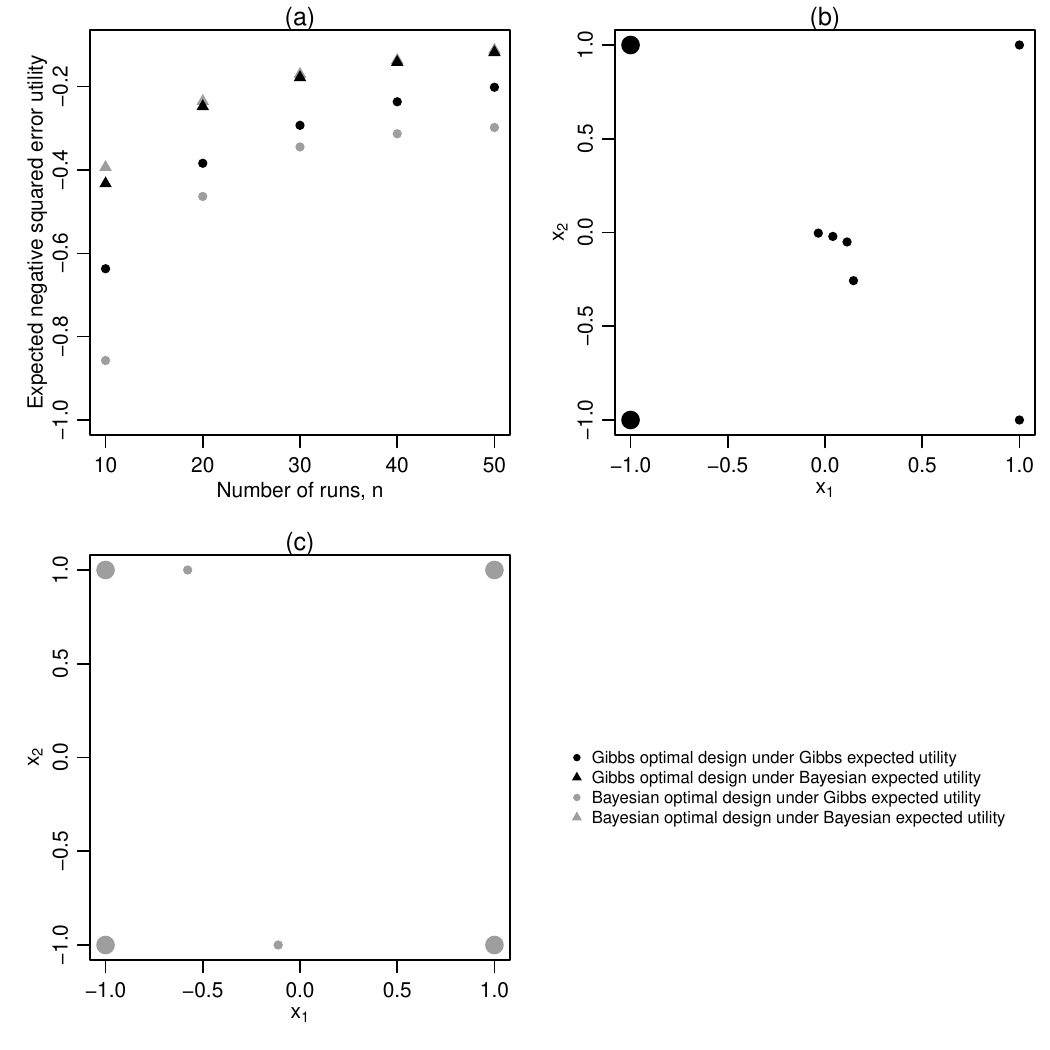}
    \caption{Plots for the count response experiment. Sub-panel (a) shows plots of the Gibbs and Bayesian expected negative squared error utility against number of runs, $n$, for the Gibbs and Bayesian optimal designs. Sub-panels (b) and (c) shows plots of $x_2$ against $x_1$ for the Gibbs and Bayesian optimal designs, respectively.}
    \label{fig:count}
\end{figure}

Figure~\ref{fig:count}(a) shows plots of the Gibbs expected negative squared error utility for both Gibbs (black circles) and Bayesian optimal designs (grey circles). It also shows the Bayesian expected negative squared error utility for both designs (triangles with Gibbs black and Bayesian grey). It can be seen that both designs are sub-optimal under the alternative expected utility. However, the difference on the scale of the expected Bayesian utility is larger than on the scale of the expected Gibbs utility. Informally, less is lost by not making strong assumptions and these assumptions being true, than by making these strong assumptions and these not being true.

Figures~\ref{fig:count}(b) and (c) shows plots of $x_2$ against $x_1$ for the Gibbs and Bayesian optimal designs, respectively. The size of the plotting character is proportional to the number of repeated designs points at each support point. Clearly, the Bayesian optimal design places points at the extremes of $\mathbb{X}$. Whereas, the Gibbs optimal design includes points in the interior. 

\subsection{Time-to-event responses and proportional hazards}

In this example, the aim of the experiment it to learn the relationship between controllable variables and a time-to-event response (possible subjected to non-informative right censoring). Previous approaches for designing these type of experiments have been proposed by, for example, \cite{chaloner_larntz_1992}, \cite{konstantinou_etal_2015}, and \cite{konstantinou_etal_2017}.

It is common for time-to-event responses to assume a proportional hazards model \citep[see, e.g., ][Section 10.8.2]{davison2003}. This is where the pdf for the response is
\begin{equation}
f(y_i; \bt, \bx_i) = h_0(y_i) \exp \left[ \eta_i - e^{\eta_i} \int_0^{y_i} h_0(u)\mathrm{d}u \right],
\label{eqn:ph}
\end{equation}
where $h_0(\cdot)$ is a baseline hazard and $\eta_i = \bff(\bx_i)^\T  \bt$. In this example, we suppose that $\bff(\bx) = \left(1,x_1,x_2,x_3\right)^\T $, i.e. an intercept and main effects for $k=3$ controllable variables. Bayesian inference requires that a particular form for the baseline hazard is assumed \citep[see, e.g., ][Chapter 2]{ibrahim_etal_2001}. By contrast, under the frequentist approach, the Cox proportional hazards model does not requires a particular form for the baseline hazard to be assumed, with estimators given by maximising the partial likelihood. 

\subsubsection{Gibbs optimal designs} \label{sec:PHgod}

We consider Gibbs inference using the negative partial log-likelihood as a loss function. \cite{bissiri_etal_2016} considered such an analysis for investigating the relationship of bio-markers and survival time for colon cancer. They set $w=1$ which can be informally justified as ensuring the asymptotic Gibbs posterior variance is equal to the asymptotic variance of the maximum partial likelihood estimators.

If we assume that all responses are unique (i.e. there are no tied responses), then the loss function is
$$\ell_{PL}(\by, \bc; \bt) = \sum_{i=1}^n c_i \left[ \log \sum_{j \in R_i} e^{\eta_j} - \eta_i \right],$$
where $\bc = (c_1,\dots,c_n)$, with $c_i = 0$ if the $i$th response is right censored at $y_i$, and $c_i = 1$ otherwise, and $R_i = \left\{j \in (1,\dots,n) \vert y_j \ge y_i \right\}$ is the risk set, i.e. the subset of $\left\{1,\dots,n\right\}$ of responses which have not been observed or right censored before $y_i$. 

For the designer distribution, we assume $y_i$ and $c_i$ are independent. For the censoring indicators, it is assumed that $c_i \sim \mathrm{Bernoulli}(\rho)$ for some $\rho \in [0,1]$ controlling the probability of right censoring, with $\rho=1$ corresponding to no censoring. If the designer distribution for $y_i$ is a proportional hazards distribution, i.e. has pdf given by (\ref{eqn:ph}), with $\eta_i = \bff(\bx_i)^\T  \bbeta$, then the target parameter values are $\btheta_{PL,D}(\br,\bc,X) = \bbeta$. Moreover, the maximum partial likelihood estimators and Gibbs posterior are invariant to the actual choice of baseline hazard. Therefore, for simplicity, we use the exponential distribution with $h_0(u) =1$. The hyper-variables are $\br = \left(\bbeta, \rho \right)$. The hyper-variable distribution is such that $\bbeta$ and $\rho$ are independent. The elements of $\bbeta$ are assumed independent with each element having a $\mathrm{N}(0,5)$ distribution. We fix $\rho$ and investigate sensitivity to its specification.

\begin{figure}
    \includegraphics[scale=0.9]{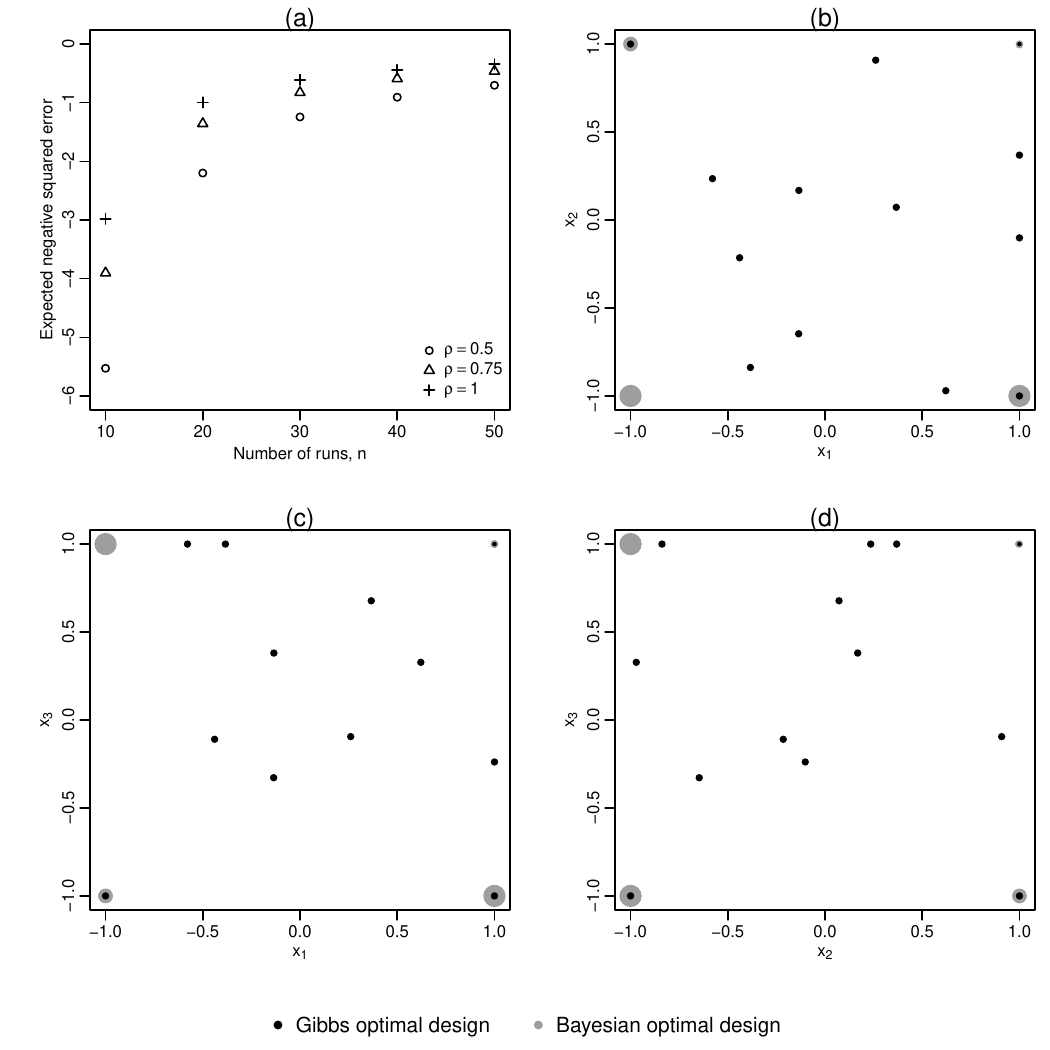}
    \caption{Panel (a): Plot of expected negative squared error utility against number of runs, $n$, for different levels of censoring (controlled by $\rho$). Panels (b)-(d): Plots of $x_1$, $x_2$ and $x_3$ for the Gibbs and Bayesian optimal designs. The size of the plotting character is proportional to the number of repetitions of the design points.}
    \label{fig:Cox}
\end{figure}

We consider number of runs, $n=10,20,\dots,50$, censoring probability $\rho = 0.5, 0.75, 1.00$ and the NSE utility function. Figure~\ref{fig:Cox}(a) shows a plot of Gibbs expected NSE utility against number of runs, $n$, for the different values of $\rho$. It makes intuitive sense, that as $\rho$ increases (probability of censoring decreases), the Gibbs expected utility increases since less information is lost. The effect of amount of the censoring on the actual design was investigated and it was found that the optimal design was only negligibly affected by probability of censoring. For example, the optimal design for $n=10$ and $\rho=0.5$ has a very similar value of expected Gibbs utility (with $\rho = 1$) as the optimal design found under $\rho = 1$.

We compare the Gibbs optimal designs to Bayesian optimal designs under a Weibull proportional hazards model. In this case, $h_0(u) = \psi u^{\psi-1}$, where $\psi >0$ is an unknown nuisance parameter. If $\psi > 1$ ($\psi<1$), then the baseline hazard is an increasing (decreasing) function of $u$. The log-likelihood is given by
\begin{eqnarray*}
& & \log \pi(\by , \bc \vert \bbeta, \psi; X)\\
& &  = \sum_{i=1}^n I(c_i=1) \left[ \eta_i + \log \psi + (\psi-1)\log y_i \right] - e^{\eta_i}y_i^{\psi}.
\end{eqnarray*}
The elements of $\bbeta$  and $\bc$ are assumed to have the same distributions as under the Gibbs expected utility above. The parameter $\psi$ is assumed to have a $\mathrm{U}[0.5,1.5]$ distribution. 

\begin{table}
    \centering
    \begin{tabular}{lrrrrr} \hline
     Number of runs, $n$ & 10 & 20 & 30 & 40 & 50 \\ \hline
     $\hat{U}_G(\bar{X}_B)/\hat{U}_G(\bar{X}_G)$ & 0.037 & 0.061 & 0.131 & 0.159 & 0.175 \\
     $\hat{U}_B(\bar{X}_G)/\hat{U}_B(\bar{X}_B)$ & 0.270 & 0.510 & 0.600 & 0.650 & 0.680 \\ \hline
    \end{tabular}
    \caption{Estimated expected utility efficiency of the Gibbs and Bayesian optimal designs under the alternative utility.}
    \label{tab:cens}
\end{table}

We present results for the case where the censoring probability is $\rho = 0.75$. The results for $\rho = 0.5$ and $\rho = 1$ were qualitatively the same. Table~\ref{tab:cens} shows the estimated efficiency of the Gibbs and Bayesian optimal designs under the alternative utility. Suppose for a given value of $n$ and $\rho$, $\bar{X}_G$ and $\bar{X}_B$ denote the Gibbs and Bayesian optimal design, respectively. Then, for example, the efficiency of $\bar{X}_G$ under the expected Bayesian utility is $U_B(\bar{X}_G)/U_B(\bar{X}_B)$. The conclusion is similar to the count response experiment in Section~\ref{sec:count}, in that less information is lost by not making strong assumptions (i.e. assuming a Weibull distribution) and these assumptions being true than by making these strong assumptions and these not being true.

Figure~\ref{tab:cens}(b)-(c) show projections of the design points. The size of the plotting symbol is proportional to the number of repeated design points. A similar phenomenon to the count response experiment in Section~\ref{sec:count} occurs, in that the Gibbs optimal design has more points in the interior.

\section{Discussion} \label{sec:disc}

This paper proposes a new design of experiments framework called Gibbs optimal design. This framework extends the decision-theoretic Bayesian optimal design of experiments approach to account for potentially misspecified statistical models. A computational approach is outlined for finding designs in practice. The framework is demonstrated on illustrative examples involving linear models, in addition to experiments involving count and time-to-event responses. From the examples, it was found that less information is lost by not making strong assumptions, and these being true, than by making strong assumptions and these assumptions not being true.

The key is the specification of the designer distribution. It needs to be flexible enough that it is plausible that the true response-generating distribution is a special case. In the illustrative examples, a unique-treatment model was employed as the designer distribution.

The framework should open up avenues for further research. One interesting area is that the target parameter values of Gibbs inference under a fixed design are dependent on the design. This can be viewed as an unattractive property, and future work will attempt to address this. One potential idea is that the utility function be evaluated at \emph{desired target parameter values}, where these are the limit of the target parameter values as $n \to \infty$ (and under certain conditions) and have physical interpretation independent of the design. As an example, consider the linear model in Section~\ref{sec:LM}. Under certain conditions, \cite{wong_etal_2017} showed that the target parameter values under the sum of squares loss converged to the parameter values that minimise the $L^2$ norm of the difference between the designer distribution mean response and $\bff(\bx)^\T  \bt$. These parameter values have desirable physical interpretation.

A further area of research is to investigate the sensitivity of the Gibbs optimal design to the method chosen to specify calibration weight. In the numerical example in Section~\ref{sec:SS}, the Gibbs optimal designs were insensitive to the different specifications of the calibration weight. This will need further investigation in consort with current research on automatic specifications.

Lastly, in Section~\ref{sec:comp}, a computational approach was outlined for finding Gibbs optimal designs by maximising the Gibbs expected utility. This was based on the simple premise of approximating the Gibbs posterior distribution by a normal distribution. This builds on existing approaches to find Bayesian optimal designs. We encourage researchers who develop computational methodology to find Bayesian designs to generalise to finding Gibbs optimal designs. There is an additional step in the latter case in that the target parameter values, $\btheta_{\ell,\mathcal{D}}(\br; X)$, need to be determined. This is exemplified by Step 2.(b) in the algorithm in Section~\ref{sec:comp}. However, this step adds negligible computational complexity relative the ultimate task of approximating and maximising the expected utility over a design space of, potentially, high dimensionality.

\appendix

\section{Justification of target parameter values given by equation (6)} \label{sec:SM1}

The following results will prove useful.

\begin{lemma} \label{lm:lemma}
If a linear model has model matrix $F$, and the corresponding unique-treatment model has model matrix $Z$, with $H_Z = Z\left(Z^TZ\right)^{-1} Z$, then $F = H_ZF$.
\end{lemma}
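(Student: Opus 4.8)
The plan is to exploit the block structure of $Z$. By construction, $Z$ is the $n\times q$ matrix whose $i$th row is $\be_j^\T$, where $\bar{\bx}_j$ is the unique treatment equal to $\bx_i$; equivalently, the columns of $Z$ are the indicator vectors of the groups of runs sharing a common treatment. Since $\bff(\bx_i)$ depends on $\bx_i$ only, any two rows of $F$ indexed by runs with the same treatment are identical. The first step, therefore, is to write $F$ as a product that factors through $Z$: let $\bar{F}$ be the $q\times p$ matrix with $j$th row $\bff(\bar{\bx}_j)^\T$, for $j=1,\dots,q$. Then $F = Z\bar{F}$, because row $i$ of $Z\bar{F}$ picks out row $j$ of $\bar{F}$ exactly when $\bx_i = \bar{\bx}_j$, which reproduces $\bff(\bx_i)^\T$.

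The second step is a direct computation. Assuming (as is implicit, since otherwise $H_Z$ is undefined) that each unique treatment is used at least once, $Z$ has full column rank $q$, so $Z^\T Z$ is invertible and $H_Z = Z(Z^\T Z)^{-1}Z^\T$ is well defined. Substituting $F = Z\bar{F}$ gives
$$
H_Z F = Z(Z^\T Z)^{-1}Z^\T Z \bar{F} = Z(Z^\T Z)^{-1}(Z^\T Z)\bar{F} = Z\bar{F} = F,
$$
which is the claimed identity. (Conceptually: $H_Z$ is the orthogonal projector onto the column space of $Z$, which consists of all vectors constant within each treatment group; every column of $F$ already has this property, hence is fixed by $H_Z$.)

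There is essentially no obstacle here: the only thing to be careful about is the standing assumption that every unique treatment appears among $\bx_1,\dots,\bx_n$, so that $Z$ has full column rank and $(Z^\T Z)^{-1}$ exists — this holds automatically from the definition of the $\bar{\bx}_j$ as the distinct values of the $\bx_i$. One should also note the harmless typo in the statement: $H_Z$ is $Z(Z^\T Z)^{-1}Z^\T$, as used consistently in the main text.
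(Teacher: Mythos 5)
Your proof is correct, and it takes a slightly different route from the paper's. The paper argues via the explicit structure of $H_Z$: after reordering runs so that each unique treatment $\bar{\bx}_j$ occupies a contiguous block of $n_j$ rows, it writes $F$ as a stack of blocks $\bar{F}_j$ with identical rows and $H_Z$ as the block-diagonal matrix with blocks $\tfrac{1}{n_j}J_{n_j}$, then checks $J_{n_j}\bar{F}_j = n_j\bar{F}_j$ blockwise, so $H_Z$ is exhibited concretely as within-group averaging. You instead factor $F = Z\bar{F}$ with $\bar{F}$ the $q\times p$ matrix of rows $\bff(\bar{\bx}_j)^\T$, and then the identity follows in one line from $H_Z Z = Z(Z^\T Z)^{-1}(Z^\T Z) = Z$, with the conceptual gloss that $H_Z$ is the orthogonal projector onto the column space of $Z$, which contains the column space of $F$. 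Your version avoids the ``without loss of generality'' reordering and the explicit block-diagonal form of $H_Z$, and it makes the invertibility of $Z^\T Z$ (full column rank of $Z$, guaranteed since every $\bar{\bx}_j$ occurs at least once) an explicit hypothesis rather than an implicit one; the paper's version buys a concrete picture of what $H_Z$ does to the data, which is reused informally elsewhere. You are also right that the statement's $H_Z = Z(Z^\T Z)^{-1}Z$ is a typo for $Z(Z^\T Z)^{-1}Z^\T$, consistent with the main text.
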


\begin{proof}

Without loss of generality, unique-treatment $\bar{\bx}_j$ is repeated $n_j \ge 1$ times, for $j=1,\dots,q$. Then $n = \sum_{j=1}^q n_j$. Let $\bar{\bff}_j = \bff(\bar{\bx}_j)$, for $j=1,\dots,q$. Then the model matrix $F$ can be written
$$F = \left(
\begin{array}{c}
\bar{F}_1\\
\vdots\\
\bar{F}_q 
\end{array}\right) 
$$
where $\bar{F}_j$ is an $n_j \times p$ matrix with each row given by $\bar{\bff}_j$, i.e. the rows of $\bar{F}_j$ are identical. The corresponding $H_Z$ matrix can be written in the following block diagonal form
$$H_Z = \left( \begin{array}{ccc}
\frac{1}{n_1} J_{n_1} & 0 & \\
0 & \ddots & 0 \\
 & 0 & \frac{1}{n_q}J_{n_q}
 \end{array} \right),
$$
where $J_{n_j}$ is the $n_j \times n_j$ matrix of ones.
Then
$$H_Z F = \left( \begin{array}{c}
\frac{1}{n_1} J_{n_1}\bar{F}_1 \\
\vdots \\
\frac{1}{n_q} J_{n_q}\bar{F}_q
\end{array}\right),$$
where $J_{n_j} \bar{F}_j= n_j \bar{F}_j$. Therefore $H_Z F = F$ as required.
\end{proof}

\begin{lemma} \label{lm:lemma2}
If the designer distribution, $\mathcal{D}(\br,X)$, is such that $\by \sim \mathrm{N}\left( Z \bar{\bmu}, \kappa I_n \right)$, then
\begin{enumerate}
\item[(i)]
$\Vert \by \Vert_{H_F}$ and $\Vert \by \Vert_{I_n - H_Z}$ are independent;
\item[(ii)]
$F^T\by$ and $\Vert \by \Vert_{I_n - H_Z}$ are independent;
\item[(iii)]
$\Vert \by \Vert_{I_n - H_Z} \sim \chi^2_{d}$;
\item[(iv)]
\begin{eqnarray*}
\mathrm{E}_{\mathcal{D}(\br,X)}\left[ \frac{1}{\Vert \by \Vert_{I_n - H_Z}} \right] &=& \frac{1}{\kappa (d-2)}\\
\mathrm{E}_{\mathcal{D}(\br,X)}\left[ \log \Vert \by \Vert_{I_n - H_Z} \right] &=& \log \kappa + \log 2 + \psi \left(\frac{d}{2}\right).
\end{eqnarray*}
\end{enumerate}
\end{lemma}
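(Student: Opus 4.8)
The plan is to reduce all four claims to textbook distribution theory for linear and quadratic forms in a spherical Gaussian, the only non-routine ingredient being the nesting of column spaces recorded in Lemma~\ref{lm:lemma}. I would begin by translating that lemma into two algebraic identities. Since $F = H_Z F$ with $H_Z$ symmetric and idempotent, transposing gives $F^\T = F^\T H_Z$, whence $H_F H_Z = F(F^\T F)^{-1}(F^\T H_Z) = H_F$ and, taking transposes, $H_Z H_F = H_F$. It follows that $H_F(I_n - H_Z) = H_F - H_F H_Z = 0$ and $F^\T(I_n - H_Z) = F^\T - F^\T H_Z = 0$, and that $H_F$, $H_Z - H_F$ and $I_n - H_Z$ are mutually orthogonal symmetric idempotents summing to $I_n$ (with ranks $p$, $q-p$ and $n-q=d$, assuming $F$ and $Z$ have full column rank). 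Finally, from $H_Z Z = Z$ the designer-distribution mean $\bmu_Z := Z\bar{\bmu}$ satisfies $(I_n - H_Z)\bmu_Z = \bzero_n$.

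For parts (i) and (ii) I would exploit that the designer covariance $\kappa I_n$ is scalar, so the orthogonally projected vectors $H_F\by$, $(H_Z - H_F)\by$ and $(I_n - H_Z)\by$ are jointly normal with pairwise cross-covariances equal to $\kappa$ times a product of two orthogonal projectors, hence $\bzero$; being jointly Gaussian with zero covariance, the three blocks are mutually independent. Part (i) is then immediate because $\Vert\by\Vert_{H_F} = \by^\T H_F\by = (H_F\by)^\T(H_F\by) = \Vert H_F\by\Vert_2^2$ (using $H_F^2 = H_F = H_F^\T$) depends only on $H_F\by$, while $\Vert\by\Vert_{I_n-H_Z} = \Vert(I_n-H_Z)\by\Vert_2^2$ depends only on $(I_n-H_Z)\by$. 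For part (ii), write $F^\T\by = F^\T H_Z\by$ (valid since $F^\T(I_n-H_Z)=0$); as $H_Z\by = H_F\by + (H_Z-H_F)\by$ is independent of $(I_n-H_Z)\by$, so is $F^\T\by$, and hence so is $F^\T\by$ of $\Vert\by\Vert_{I_n-H_Z}$. Equivalently one may cite the Craig/Fisher--Cochran conditions: for $\by \sim \mathrm{N}(\cdot,\kappa I_n)$, forms $A\by$ and $\by^\T M\by$ are independent when $AM=0$, and $\by^\T M_1\by$, $\by^\T M_2\by$ when $M_1M_2=0$; here take $A=F^\T$, or $M_1=H_F$, with $M=M_2=I_n-H_Z$.

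For part (iii) I would write $I_n - H_Z = PP^\T$ with $P$ an $n\times d$ matrix of orthonormal columns. Then $P^\T\by \sim \mathrm{N}(P^\T\bmu_Z,\kappa I_d)$, and $P^\T\bmu_Z = \bzero_d$ because $PP^\T\bmu_Z = \bzero_n$ forces $P^\T\bmu_Z = P^\T PP^\T\bmu_Z = \bzero_d$; consequently $\Vert\by\Vert_{I_n-H_Z} = \Vert P^\T\by\Vert_2^2 = \kappa W$ with $W \sim \chi^2_d$, i.e.\ $\Vert\by\Vert_{I_n-H_Z}/\kappa \sim \chi^2_d$, which is the form used in (iv). Part (iv) then reduces to two elementary chi-squared moments: $\mathrm{E}(W^{-1}) = 1/(d-2)$ for $d>2$ (a routine gamma-function integral giving $\Gamma(d/2-1)/\{2\Gamma(d/2)\}$), and $\mathrm{E}(\log W) = \log 2 + \psi(d/2)$, obtained by differentiating $\mathrm{E}(W^s) = 2^s\Gamma(d/2+s)/\Gamma(d/2)$ at $s=0$. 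Substituting $\Vert\by\Vert_{I_n-H_Z} = \kappa W$ yields $\mathrm{E}(\Vert\by\Vert_{I_n-H_Z}^{-1}) = 1/\{\kappa(d-2)\}$ and $\mathrm{E}(\log\Vert\by\Vert_{I_n-H_Z}) = \log\kappa + \log 2 + \psi(d/2)$.

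I do not anticipate a genuine obstacle: the computation is entirely standard once the projections are in place. The one point requiring care is verifying that $\by^\T(I_n-H_Z)\by$ is a \emph{central} rather than non-central chi-square, which is precisely where $(I_n-H_Z)\bmu_Z = \bzero_n$ — itself a consequence of $H_ZZ=Z$ — is needed; correspondingly, the two independence statements hinge only on the projector identities $H_FH_Z=H_F$ and $F^\T(I_n-H_Z)=0$ extracted from Lemma~\ref{lm:lemma}.
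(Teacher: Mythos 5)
Your proposal is correct and follows essentially the same route as the paper: it extracts the identities $H_F(I_n-H_Z)=0$, $F^\T(I_n-H_Z)=0$ and $(I_n-H_Z)Z\bar{\bmu}=\bzero_n$ from Lemma~\ref{lm:lemma} and then applies standard results on linear and quadratic forms in a spherical Gaussian (Craig-type independence, central chi-square, chi-squared moments), exactly as the paper does by citing Hocking's theorems, with your explicit projection/$PP^\T$ arguments merely replacing those citations. Your remark that $\Vert\by\Vert_{I_n-H_Z}$ is $\kappa$ times a $\chi^2_d$ variable (with $d>2$ needed for the inverse moment) is the correct reading of part (iii) as it is used in part (iv).
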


\begin{proof}

Statements (i) to (iv) are proved as follows.

\begin{enumerate}
\item[(i)]
From Lemma~\ref{lm:lemma}, $\left(I_n - H_Z \right)H_F = H_F - H_ZH_F = H_F - H_Z F (F^TF)^{-1} F^T = H_F - F (F^TF)^{-1} F^T = H_F - H_F = 0_{n \times n}$. Statement (i) follows from Craig's theorem (see, for example, \citealt{hocking_2013}, Theorem 10.2).
\item[(ii)]
From Lemma~\ref{lm:lemma}, $F^T \left(I_n - H_Z \right) = \left[ \left(I_n - H_Z \right) F \right]^T = 0_{n \times n}$. Statement (ii) follows from, for example, \citealt{hocking_2013}, Theorem 10.3.
\item[(iii)]
Since $I_n - H_Z$ is idempotent and
\begin{eqnarray*}
\Vert Z \bar{\bmu} \Vert_{I_n - H_Z} & = & \bar{\bmu}^T Z^T Z \bar{\bmu} - \bar{\bmu}^T Z^T H_Z Z \bar{\bmu}\\
& = & \bar{\bmu}^T Z^T Z \bar{\bmu} - \bar{\bmu}^T Z^T Z \bar{\bmu}\\
& = & 0,
\end{eqnarray*}
statement (iii) follows from, for example, \citealt{hocking_2013}, Theorem 10.1.
\item[(iv)]
The expectations in statement (iv) follow from properties of $\chi^2_d$.
\end{enumerate}
\end{proof}

The target parameter values minimise
\begin{eqnarray*}
L_{SS, \mathcal{D}}(\bt; \br, X) & = & \mathrm{E}_{\mathcal{D}(\br, X)} \left[ \frac{\Vert \by - F \bt \Vert_{I_n}}{2\tilde{\sigma}^2} \right]\\
& = & \frac{n-q}{2} \mathrm{E}_{\mathcal{D}(\br, X)} \left[ \frac{\Vert \by - F \bt \Vert_{I_n}}{\Vert \by \Vert_{I_n- H_Z}} \right]
\end{eqnarray*}
where $\mathcal{D}(\br,X)$ has $\by \sim \mathrm{N}\left( Z \bar{\bmu}, \kappa I_n \right)$ and $\br = \left(\mu(\cdot), \kappa, \sigma^2 \right)$. Differentiating $L_{SS, \mathcal{D}}(\bt; \br, X)$ with respect to $\bt$ and setting equal to $\bzero_p$ gives the following target parameter values
$$\btheta_{SS, \mathcal{D}}(\br, X) = \frac{1}{\mathrm{E}_{\mathcal{D}(\br, X)} \left( \Vert \by \Vert_{I_n - H_Z}^{-1} \right)} \left(F^TF\right)^{-1} \mathrm{E}_{\mathcal{D}(\br, X)} \left( \frac{F^T\by}{\Vert \by \Vert_{I_n - H_Z}} \right).$$
From Lemma~\ref{lm:lemma2} (ii), $F^T\by$ and $\Vert \by \Vert_{I_n - H_Z}$ are independent, so
\begin{eqnarray*}
\btheta_{SS, \mathcal{D}}(\br, X) &=& \frac{1}{\mathrm{E}_{\mathcal{D}(\br, X)} \left( \Vert \by \Vert_{I_n - H_Z}^{-1} \right)} \left(F^TF\right)^{-1} \mathrm{E}_{\mathcal{D}(\br, X)} \left( F^T\by \right) \mathrm{E}_{\mathcal{D}(\br, X)} \left( \Vert \by \Vert_{I_n - H_Z}^{-1} \right),\\
&=& \left(F^TF\right)^{-1}F^T \bmu,
\end{eqnarray*}
as required.

\section{Justification of Gibbs expected utilities given by equation (7)} \label{sec:SM2}

\subsection{NSE utility}

The Gibbs posterior mean is given by
$$\mathrm{E}_{\mathcal{G}}\left[ \btheta_{SS,\mathcal{D}}(\br;X)\right] = \hat{\btheta}_{SS}(\by;X).$$
Therefore, the inner expectation (i.e. with respect to $\by$ under $\mathcal{D}(\br;X)$) in the Gibbs expected NSE utility is 
\begin{eqnarray*}
\mathrm{E}_{\mathcal{D}(\br;X)} \left[ u_{\mathcal{G},NSE} \left( \btheta_{SS,\mathcal{D}}(\br;X), \by, X \right)\right] & = & \mathrm{E}_{\mathcal{D}(\br;X)} \left[ \Vert \btheta_{SS,\mathcal{D}}(\br;X) - \hat{\btheta}_{SS}(\by;X) \Vert_2^2 \right]\\
& = & - \sigma^2 \mathrm{tr}\left[ \left(F^TF \right)^{-1}\right],
\end{eqnarray*}
since $\mathrm{E}_{\mathcal{D}(\br;X)} \left[\hat{\btheta}_{SS}(\by;X)\right] = \btheta_{SS,\mathcal{D}}(\br;X)$. Now the Gibbs expected NSE utility is
\begin{eqnarray*}
U_{\mathcal{G},NSE}(X) & = & \mathrm{E}_{\mathcal{C}}\left[ - \sigma^2 \mathrm{tr}\left[ \left(F^TF \right)^{-1}\right]\right],\\
& = &  \mathrm{E}_{\mathcal{C}}\left( - \sigma^2 \right) \mathrm{tr}\left[ \left(F^TF \right)^{-1}\right].
\end{eqnarray*}
A constraint of $n > q$ ensures that the Gibbs posterior distribution exists.

\subsection{SH utility}

The Gibbs posterior distribution is $\mathrm{N}\left[\hat{\btheta}_{SS}(\by;X), \hat{\sigma}^2 \left(F^TF\right)^{-1}\right]$ with log pdf
\begin{eqnarray}
\log \pi_\mathcal{G}\left(\bt \vert \by; X\right) &=& -\frac{p}{2} \log \left(2\pi\right) - \frac{p}{2}\log \hat{\sigma}^2 + \frac{1}{2} \log \vert F^TF \vert - \frac{1}{2\hat{\sigma}^2} \left[ \bt - \hat{\btheta}_{SS}(\by;X)\right]^T F^TF\left[ \bt - \hat{\btheta}_{SS}(\by;X)\right] \nonumber \\
& = & -\frac{p}{2} \log \left(2\pi\right) + \frac{p}{2}\log d - \frac{p}{2}\log \left[ \by^T \left(I_n - H_Z \right) \by \right] + \frac{1}{2}\log \vert F^TF \vert \nonumber \\
& & - \frac{d \bt^T F^TF \bt}{2\by^T \left(I_n - H_Z \right) \by} +  \frac{d \bt^T F^T\by}{\by^T \left(I_n - H_Z \right) \by} - \frac{d \by^T H_F\by}{2\by^T \left(I_n - H_Z \right) \by}, \label{eqn:sh1}
\end{eqnarray}
where $H_F = F \left(F^TF\right)^{-1}F$. The line (\ref{eqn:sh1}) follows from $\hat{\sigma}^2 = \by^T \left(I_n - H_Z \right) \by/(n-q)$ and $d = n - q$ is the pure error degrees of freedom. 

From Lemma~\ref{lm:lemma2} (i), (ii), and (iv), it follows that the inner expectation (i.e. with respect to $\by$ under $\mathcal{D}(\br;X)$) in the Gibbs expected SH utility is 

\begin{eqnarray}
\mathrm{E}_{\mathcal{D}(\br;X)} \left[ u_{\mathcal{G},SH} \left( \btheta_{SS,\mathcal{D}}(\br;X), \by, X \right)\right] & = & \mathrm{E}_{\mathcal{D}(\br;X)} \left[ \log \pi_\mathcal{G}\left(\btheta_{SS,\mathcal{D}}(\br;X)\vert \by; X\right) \right] \nonumber \\
& = & -\frac{p}{2} \left[ \log (2\pi) + \log \sigma^2 + \log 2\right] - \frac{p}{2} h_2(d) + \frac{1}{2}\log \vert F^TF \vert, \label{eqn:sh2}
\end{eqnarray}
since $\btheta_{SS,\mathcal{D}}(\br;X) = \left(F^TF\right)^{-1}F^T Z\bar{\bmu}$. Equation (7) follows from taking expectation of (\ref{eqn:sh2}) with respect to the hyper-variable distribution $\mathcal{C}$.

\section{Simulation study comparing fixed and random specifications of the calibration weight} \label{sec:SM3}

In this section, we present the results of a simulation study comparing the fixed and random specifications of the calibration weight, under the SS loss.

We let $k=3$ and consider the same regression function from Sections~4.2.3 and~4.3.2, i.e.
$$\bff(\bx) = \left(1, x_1,x_2,x_3,x_1^2,x_2^2,x_3^2,x_1x_2,x_1x_3,x_2x_3\right)^T.$$
We generate responses $\by$ via a uniformly distributed random design of $n$ runs using the designer distribution described in Section~4.2.3. We evaluate the Gibbs posterior distributions under both the fixed and random specifications of the calibration weight. We repeat this $B=1000$ times for $n=20,25,30,\dots,200$. 

The approaches to specify the calibration weights are compared using the median of the mean coverage and length (measuring precision) of the 95\% probability intervals for $\btheta_{SS, \mathcal{D}}(\br,X)$. The mean is over the $B=1000$ repetitions and the median is over the $p=10$ elements of $\btheta_{SS, \mathcal{D}}(\br,X)$.

Figure~\ref{fig:sim_study} shows plots of the coverage and length against $n$. The random calibration weight approach appears to results in wider (less precise) 95\% probability intervals that result in over-coverage. Moreover, the coverage of the 95\% probability intervals does not appear to converge to 95\% as $n$ increases (in contrast to the fixed calibration weight approach). This is due to the random calibration weight approach using the estimate of the error variance which is based on the incorrect regression function.

\begin{figure}
    \includegraphics[scale=0.9]{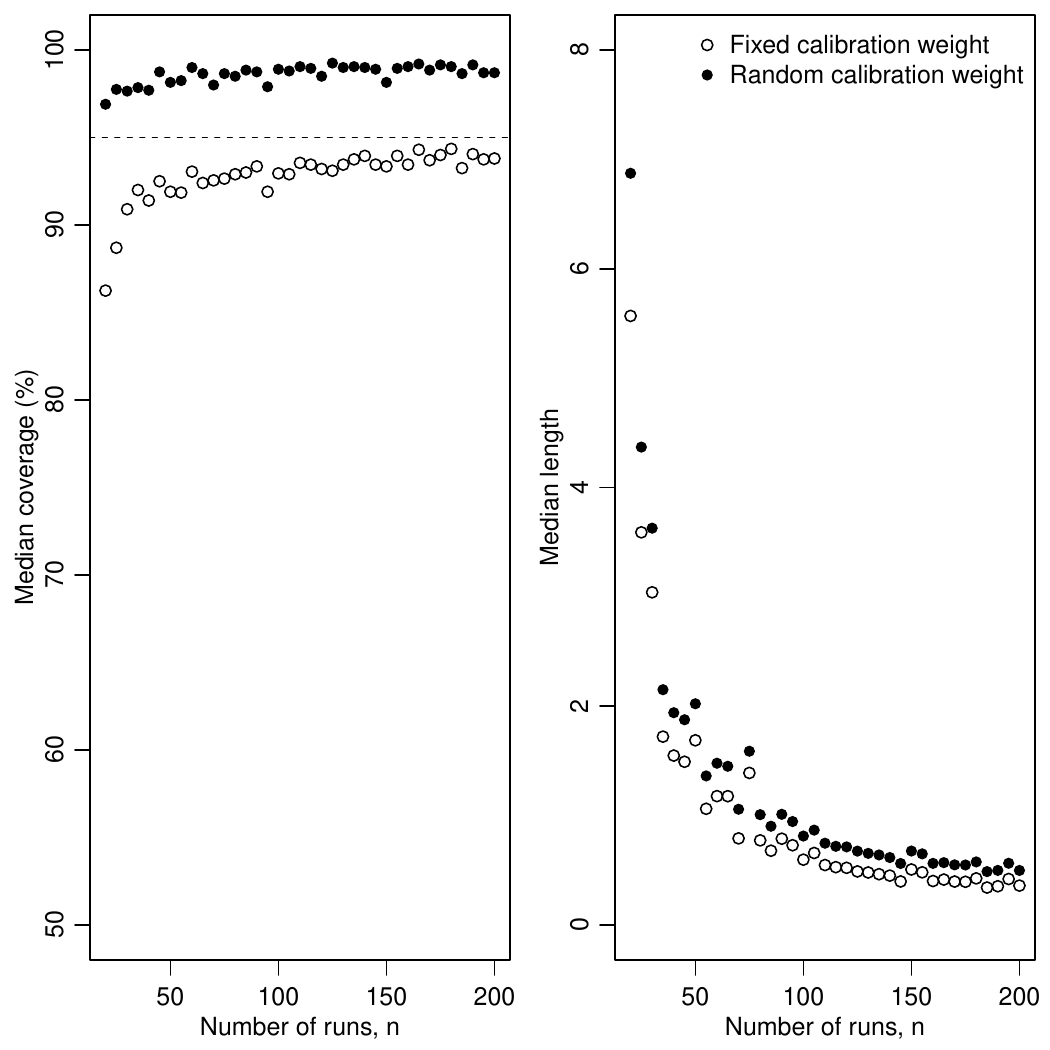}
    \caption{Plots coverage (left) and length (right) against $n$ for the two different approaches to specifying the calibration weight.}
    \label{fig:sim_study}
\end{figure}

\bibliographystyle{rss}

\bibliography{biblio}
\end{document}